\documentclass[10pt,conference,letterpaper]{IEEEtran}
\IEEEoverridecommandlockouts

\makeatletter
\long\def\@makecaption#1#2{\ifx\@captype\@IEEEtablestring%
\footnotesize\begin{center}{\normalfont\footnotesize #1}\\
{\normalfont\footnotesize\scshape #2}\end{center}%
\@IEEEtablecaptionsepspace
\else
\@IEEEfigurecaptionsepspace
\setbox\@tempboxa\hbox{\normalfont\footnotesize {#1.}~~ #2}%
\ifdim \wd\@tempboxa >\hsize%
\setbox\@tempboxa\hbox{\normalfont\footnotesize {#1.}~~ }%
\parbox[t]{\hsize}{\normalfont\footnotesize \noindent\unhbox\@tempboxa#2}%
\else
\hbox to\hsize{\normalfont\footnotesize\hfil\box\@tempboxa\hfil}\fi\fi}
\makeatother

\usepackage{amsmath}
\usepackage{amssymb}
\usepackage{overpic}
\usepackage{multirow}
\usepackage{subfigure}
\usepackage[ruled, linesnumbered, noend]{algorithm2e}
\usepackage{booktabs}
\usepackage{float}
\usepackage{epsfig}

\newcommand{\set}[1]{\{ #1 \}}
\newcommand{\setsep}{\;|\;}
\newcommand{\ov}[1]{\, \overline{\! \raisebox{0pt}[\dimexpr\height+0.2mm\relax]{$\displaystyle #1 $} \!\!}\,\,}
\newcommand{\un}[1]{\, \underline{\! #1 \!}\,}
\newcommand{\tuple}[1]{\langle #1 \rangle}
\newcommand{\shrink}{\vspace{-10pt}}
\newcommand{\eat}[1]{}


\newtheorem{definition}{\hspace{-12pt} Definition}
\newtheorem{proposition}{\hspace{-12pt} Proposition}
\newtheorem{theorem}{\hspace{-12pt} Theorem}
\newtheorem{lemma}{\hspace{-12pt} Lemma}
\newtheorem{corollary}{\hspace{-12pt} Corollary}

\setlength{\skip\footins}{7pt}

\begin{document}

\title{Reverse Nearest Neighbor Heat Maps: A Tool for Influence Exploration\vspace{-.7\baselineskip}}

\author{
\IEEEauthorblockN{
Yu Sun\IEEEauthorrefmark{1},
Rui Zhang{\small $^\S$}\IEEEauthorrefmark{1},
Andy Yuan Xue\IEEEauthorrefmark{1},
Jianzhong Qi\IEEEauthorrefmark{1},
Xiaoyong Du\IEEEauthorrefmark{2}
}
\IEEEauthorblockA{\IEEEauthorrefmark{1}Department of Computing and Information Systems, University of Melbourne\\\IEEEauthorrefmark{2}Renmin University of China and Key Laboratory of Data Engineering and Knowledge Engineering, MOE, China}
\IEEEauthorblockA{Email: \IEEEauthorrefmark{1}\{sun.y, rui.zhang, andy.xue, jianzhong.qi\}@unimelb.edu.au\hspace{0.5em}\IEEEauthorrefmark{2}duyong@ruc.edu.cn}
\vspace{.6\baselineskip}
\thanks{\vspace{-5pt}\footnotesize{$^\S$}Corresponding author. \quad \quad \quad \quad \quad \quad Copyright~\copyright~2016 IEEE}
}

\maketitle
\begin{abstract}
We study the problem of constructing a reverse nearest neighbor (RNN) heat map by finding the RNN set of every point in a two-dimensional space. Based on the RNN set of a point, we obtain a quantitative \emph{influence} (i.e., \emph{heat}) for the point. The heat map provides a global view on the influence distribution in the space, and hence supports exploratory analyses in many applications such as marketing and resource management. To construct such a heat map, we first reduce it to a problem called \emph{Region Coloring} (RC), which divides the space into disjoint \emph{regions} within which all the points have the same RNN set. We then propose a novel algorithm named CREST that efficiently solves the RC problem by labeling each region with the heat value of its containing points. In CREST, we propose innovative techniques to avoid processing expensive RNN queries and greatly reduce the number of region labeling operations. We perform detailed analyses on the complexity of CREST and lower bounds of the RC problem, and prove that CREST is asymptotically optimal in the worst case. Extensive experiments with both real and synthetic data sets demonstrate that CREST outperforms alternative algorithms by several orders of magnitude.
\end{abstract}

\section{Introduction} \label{sec:intro}
In market analysis, urban design, and facility placement, we often need to select a suitable location for new facilities such as a warehouse or a hospital.
Emerging event-based social networks such as Meetup and Whova also need to select an appropriate location suitable for the event-participant arrangement.
These problems are called the \emph{location selection} problem, which is usually a multi-criteria decision making process involving various quantitative and qualitative factors.
A quantitative factor usually considered is the \emph{influence} of the location, which is commonly measured by the \emph{reverse nearest neighbor} (RNN) set of the location~\cite{Korn2000,Sun2012cikm,Wong2011}.
Given two sets of points $\mathcal{O}$ and $\mathcal{F}$, the RNN set of a location $p$ is a subset of $\mathcal{O}$ that are closest to $p$ among all the points in $\mathcal{F}$.
There are many ways to measure the influence of $p$ by the RNN set. 
Straightforward measures consider only the size or total weight of the set~\cite{Cabello2010,Wong2011,Zhou2011}.
Other measures consider various attributes of the data points in $\mathcal{O}$ and $\mathcal{F}$, such as the capacity constraint~\cite{Melkote2001,Sun2012cikm}, social relationship~\cite{She:2015:USE,Yang:2012:SGQ}, etc.
While we can model the quantitative factors precisely by numbers, we can not do the same to many qualitative factors such as the area safety, demographic composition and convenience of public transportation.
Some factors in decision-making are also vague and imprecise, which are subject to decision maker's judgments.
To assist decision making based on quantitative measures while still allowing subjective judgments based on qualitative measures and other factors, we introduce the RNN heat map, which shows the influence (quantitative measures) of every point in the space.
Comparing to existing studies~\cite{chen2014efficient,gao2015efficient,Sun2012cikm,Wong2011,Zhou2011} which give only the points or regions with the highest influence, the RNN heat map enables exploring the influence of the whole space while considering qualitative factors at any instant during the exploration.
\begin{figure}
\centering
\subfigure[RNN heat map]{
\begin{overpic}[scale=.28]{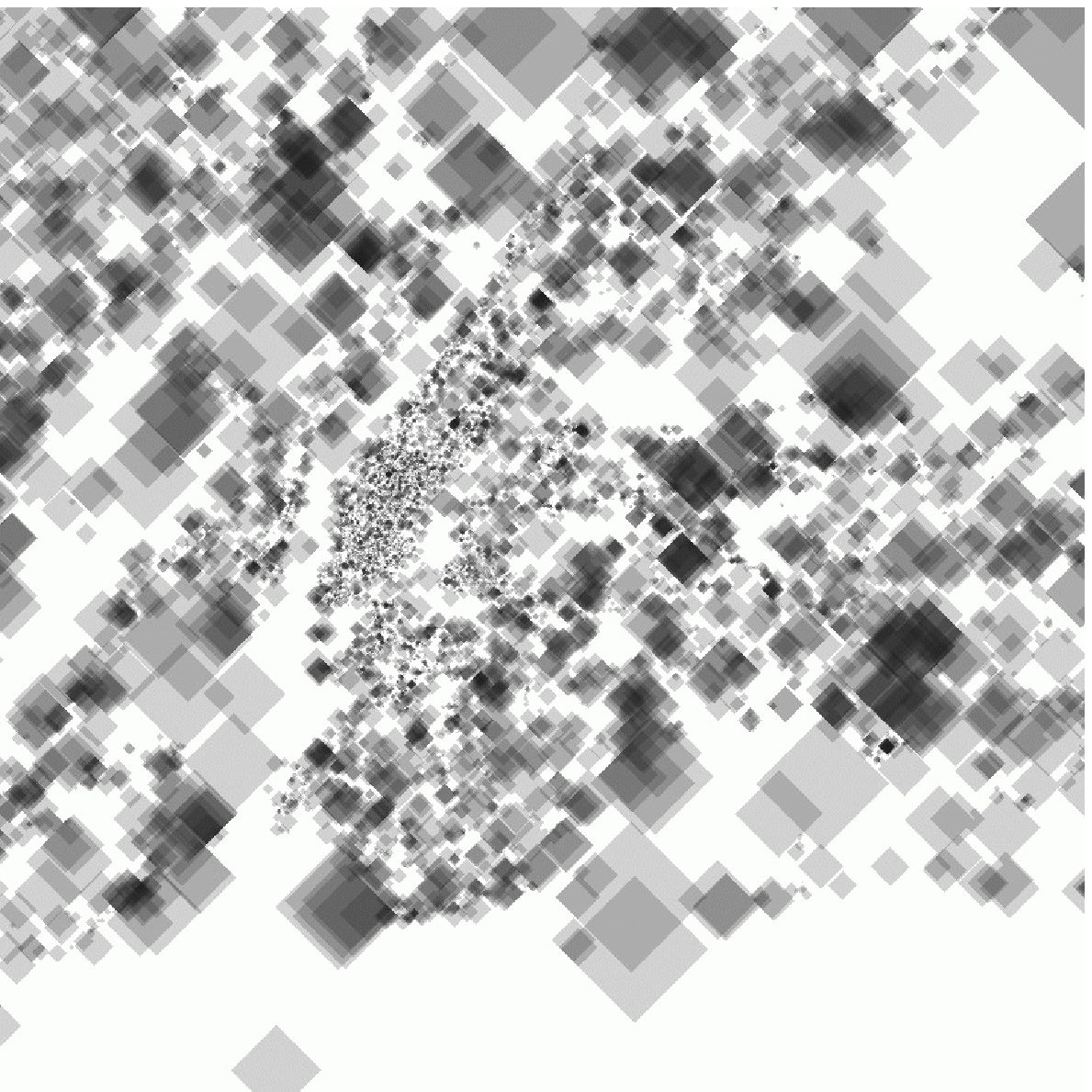}
\end{overpic}
\label{fig:ny_hm}
}
\subfigure[Satellite map]{
\begin{overpic}[scale=.28]{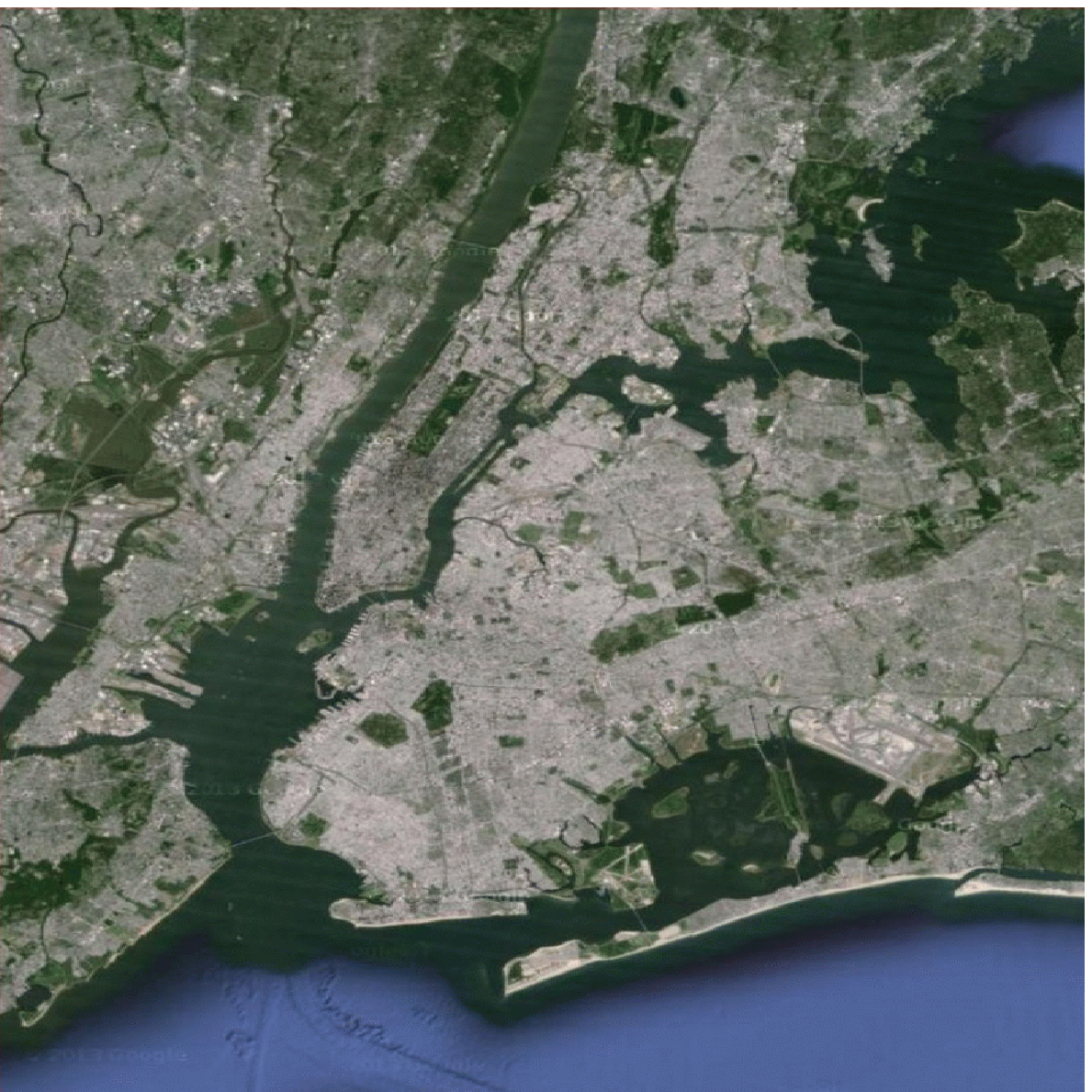}
\label{fig:ny_gm}
\end{overpic}
}
\vspace{-8pt}
\caption{RNN Heat Map of New York City}
\shrink
\vspace{-10pt}
\end{figure}
Consider a scenario where RNN heat maps are used to assist selecting locations of self-pickup and drop-off service points for courier companies.
Let $\mathcal{O}$ be the potential clients and $\mathcal{F}$ be the existing service points.
For simplicity, let the size of the RNN set measure the influence, i.e., \emph{heat} (although any other functions related to the RNN set can be used).
Fig.~\ref{fig:ny_hm} shows such an RNN heat map for the New York City, whose satellite image is shown in Fig.~\ref{fig:ny_gm}.
The darker regions indicate higher heat values.
Such a heat map will allow the exploration of influential regions while considering qualitative factors as discussed above.
Note that regions with high influence values do not necessarily correspond to regions of high client density because we need to consider the competition from existing facilities.
For example in Fig.~\ref{fig:density}, the upper left corner has the highest client density, but the most influential and the $4^{th}$ influential regions are in the middle, denoted by the two gray rectangles (the $2^{nd}$ and the $3^{rd}$ most influential regions are also in the middle near these two but too small to be visible).
Without the RNN heat map, it is very difficult or impossible to explore all these different choices and make well-informed decisions.
\begin{figure*}[!t]
\begin{minipage}[t]{0.27\linewidth} 
\centering
\begin{figure}[H] 
\centering
\begin{overpic}[scale=.61]{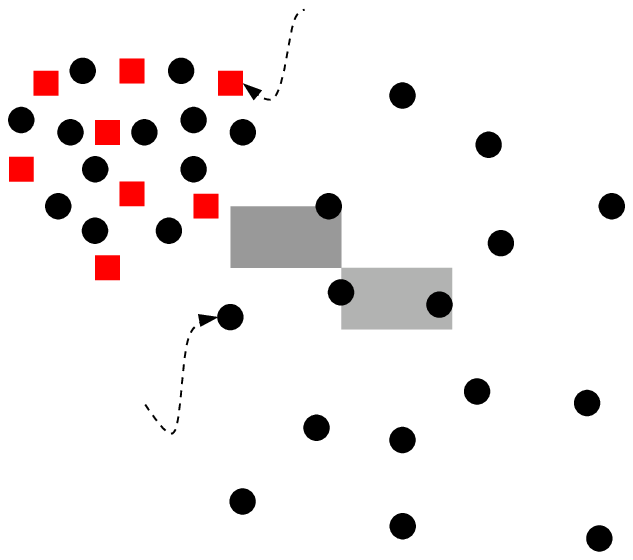}
\small{
\put(7,25){clients} \put(51,81){existing facilities}
}
\end{overpic}
\shrink
\vspace{-5pt}
\caption{Client density}
\shrink
\label{fig:density}
\end{figure}
\vspace{-5pt}
\end{minipage}
\hspace{0.04\linewidth}
\begin{minipage}[t]{0.72\linewidth}
\centering
\begin{figure}[H] 
\centering
\subfigure[]{
\begin{overpic}[scale=.52]{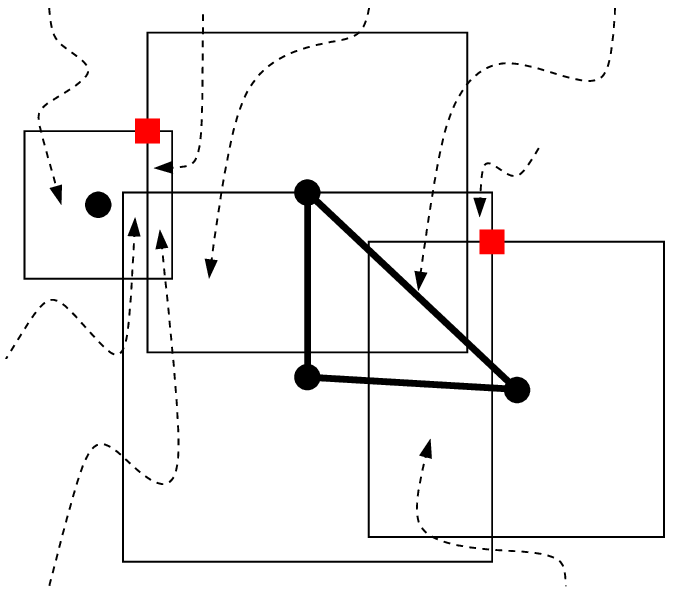}
\small{
\put(8,49){$o_3$} \put(37,62){$o_4$} \put(40,25){$o_1$} \put(78,22){$o_2$} \put(14,70){$f_1$} \put(74,53){$f_2$}
\put(12,86){$\set{o_3,o_4}$} \put(-5,86){$\set{o_3}$} \put(-13,28){$\set{o_1,o_3}$} \put(-10,-5){$\set{o_1,o_3,o_4}$} 
\put(30,12){$\set{o_1}$} \put(41,86){$\set{o_1,o_4}$} \put(40,70){$\set{o_4}$}
\put(72,66){$\set{o_1}$} \put(70,86){$\set{o_1,o_2,o_4}$} \put(70,-5){$\set{o_1,o_2}$} \put(78,37){$\set{o_2}$}
}
\end{overpic}\label{fig:intro}
}
\subfigure[A superimposition]{
\begin{overpic}[scale=.52]{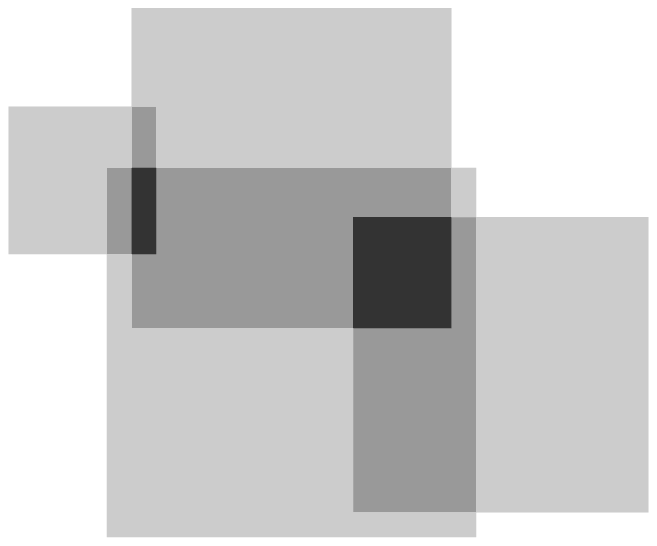}
\end{overpic}\label{fig:superim}
}
\subfigure[A heat map]{
\begin{overpic}[scale=.52]{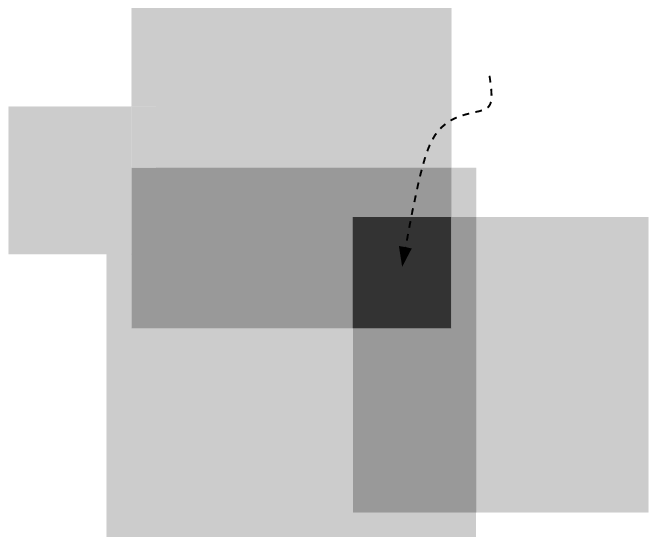}
\small{
\put(35,43){$1.0$} \put(57,20){$1.0$}
\put(69,72){$3.0$}
}
\end{overpic}\label{fig:heatmap}
}
\vspace{-7pt}
\caption{An example of the RC problem}
\shrink
\end{figure}
\vspace{-5pt}
\end{minipage}
\vspace{-1.5em}
\end{figure*}

To construct such a heat map, we need to obtain the influence value of every point in the space.
We call such a problem the \emph{RNN heat map} (RNNHM) problem:
\begin{definition}[RNN Heat Map Problem]
Given two sets of points $\mathcal{O}$ and $\mathcal{F}$ and a distance metric in a two-dimensional space, the RNN set of a point $q$ ($q \notin \mathcal{F}$) is a subset of $\mathcal{O}$ that have $q$ as their nearest neighbor comparing with other points in $\mathcal{F}$.
Given any influence measure, which is a real-valued function on the RNN set, associate each point in the space with its influence value, i.e., the heat value.
\end{definition}
Since the number of points in the space is infinite, to solve the RNNHM problem, we first reduce it to a problem called \emph{Region Coloring} (RC), which divides the space into disjoint \emph{regions}, within which all the points have the same RNN set (detailed in Section~\ref{sec:proFor}).
We use Fig.~\ref{fig:intro} to illustrate the RC problem with $L_{\infty}$.
For simplicity and ease of presentation, we will first discuss how to obtain such regions and compute their influence values with the $L_{\infty}$ metric, and then extend the techniques to $L_{1}$ and $L_{2}$ metrics.
In Fig.~\ref{fig:intro}, let $\mathcal{O} = \set{o_1, o_2, o_3, o_4}$, represented by the black dots, and $\mathcal{F} = \set{f_1,f_2}$, represented by the small red squares.
For each point $o$ in $\mathcal{O}$, we draw a ``circle" called the \emph{NN-circle} with $o$ being the center and the distance to $o$'s nearest neighbor (NN) being the radius, which is a square with the $L_{\infty}$ metric.
The NN-circles partition the space into separate regions.
It can be proved that all the points in such a region have the same RNN set.
The RC problem is to obtain the influence of each such region in the space.

Note that if we measure the influence simply by the size of the RNN set, we can build the heat map by a \emph{superimposition} of the NN-circles, i.e., overlap/overlay of translucent NN-circles as shown in Fig.~\ref{fig:superim}. A darker region suggests more NN-circles overlapping there and hence a higher influence.
However, for a more generic influence measure than the size (or a weighted sum of the RNNs), the heat map can not be achieved by such a simple superimposition.
For example, consider a taxi-sharing scenario~\cite{Ma:2014:real} where the heat map assists taxi drivers to decide the next pick-up locations.
In Fig.~\ref{fig:intro}, let $\mathcal{O}$ be potential passengers, e.g., users of taxi booking apps, and $\mathcal{F}$ be taxis.
Assume that taxi drivers make more profits when taking together multiple passengers whose destinations are close, say within one kilometer.
Let the data points $o_1$, $o_2$, and $o_4$ connected by an edge denote such passengers.
Under such setting, the influence of a location becomes the number of connected passengers in the RNN set.
We build the heat map as shown in Fig.~\ref{fig:heatmap}. We can see that there is only one darkest region, which has an influence value of $3.0$ since its RNN set is $\set{o_1,o_2,o_4}$ and there are three edges connecting $o_1$, $o_2$ and $o_4$.
In comparison, the superimposition as shown in Fig.~\ref{fig:superim} creates two darkest regions, both have an influence value of $3.0$, one with the RNN set $\set{o_1,o_3,o_4}$ and the other $\set{o_1,o_2,o_4}$.
Under the measure that favors connected data points, the RNN set $\set{o_1,o_3,o_4}$ only has an influence value of $1.0$, which is not a good choice for picking up passengers.
Another example is that in the previous courier company scenario, all the service points have a capacity limit (e.g., the storage space). 
Taking these attributes into account, the influence of a location will depend not only on the size of the RNN set but also on its serving capacity\footnote{\scriptsize The influence of a location $p$ is computed by $\sum_{f \in \mathcal{F}\cup{\{p\}}}{\min \{c(f), |\mathcal{R}(f)|\}}$, where $c(f)$ is the capacity and $\mathcal{R}(f)$ the RNN set of $f$~\cite{Sun2012cikm}.\label{foot:func}}.
The superimposition will not be able to handle such influence measures.

Besides not being able to compute the RNN heat map for generic influence measures, a superimposition also cannot support interactive post-processing operations such as selectively showing regions with heat values above a threshold or regions having the top-$k$ heat values, whereas these operations can be easily applied as post-processing of our proposed techniques, which aim to obtain the RNN set of every region in the space.

In this paper, we investigate algorithms to efficiently solve the RNN heat map problem.
In some applications such as taxi-sharing, the heat map may change as clients move around and need to be recomputed frequently. Therefore, an efficient algorithm to the RNNHM problem is crucial.
A straightforward approach such as employing a grid to divide the space and then using the cells to fit the regions has difficulties in finding the right granularity and suffers from low efficiency.
When the influence measure involves a large amount of attributes such as the capacities of taxis and connections of clients, it can also be very expensive to compute~\cite{Sun2012cikm}.
To overcome these challenges, we propose an innovative algorithm named CREST (\textbf{C}onstructing \textbf{R}NN h\textbf{E}at map with the \textbf{S}weep line s\textbf{T}rategy) which efficiently solves the RNNHM problem.
Through a detailed analysis, we prove that CREST is asymptotically optimal in the worst case.
CREST is also generic in the sense that it applies to any influence measure computable from RNN sets and can easily support interactive post-processing operations as described above.
The main contributions of this paper are summarized as follows.
\begin{itemize}
\setlength\itemsep{4pt}
\item We propose the RNN heat map problem, which computes a heat map showing the distribution of RNN-based scores to support effective exploratory analyses.
%
%
\item We propose an innovative algorithm named CREST which efficiently solves the RNN heat map problem. The algorithm utilizes two novel techniques to respectively avoid processing any RNN queries and greatly reduce the times of influence computation.
\item We carefully analyze the complexity of CREST and
lower bounds of the RC problem, and prove that CREST is asymptotically optimal in the worst case.
%
\item We also conduct extensive experiments with both real and synthetic data sets. The results confirm the superiority of CREST by showing that CREST outperforms alternative algorithms by several orders of magnitude.
%
\end{itemize}
The remainder of this paper is organized as follows. 
Section~\ref{sec:relatedwork} reviews related work. Section~\ref{sec:proFor} formalizes the problem. Section~\ref{sec:enlighten} discusses a baseline algorithm. Section~\ref{sec:sweepalgo} describes the CREST algorithm. Section~\ref{sec:analysis} analyzes the complexity. Section~\ref{sec:extension} extends CREST to other settings.
Section~\ref{sec:exp} shows the experiments and Section~\ref{sec:conclu} concludes the paper.

\section{Related Work} \label{sec:relatedwork}
\textbf{RNN Query}. The RNN query is introduced by Korn et al.~\cite{Korn2000}.
Yang et al.~\cite{Yang2001} proposed the Rdnn-tree (a variant of R-tree) to process the RNN query.
Maheshwari et al.~\cite{Maheshwari2002} present a data structure for answering the monochromatic RNN query by utilizing a persistent search tree~\cite{Sarnak:1986:PPL}.
The structure first obtains the NN-circles enclosing a query point in the $x$ dimension and then among these retrieved NN-circles locates the face (region) enclosing the point in the $y$ dimension.
These algorithms focus on computing the RNN set of a single query point. None of them directly applies to the RNNHM problem.
In the RNNHM problem, the aim is to compute the RNN set for every point in the space all at the same time, and the challenge is to avoid the expensive RNN computation.
The All Nearest Neighbor (ANN)~\cite{chen2007efficient} operation takes as input two discrete and finite sets of points and computes for each point in the first set the NN in the second set. For the RNNHM problem, however, we need to obtain RNN sets for essentially infinite points in a continuous space. Therefore, the techniques for ANN do not apply.

\textbf{Influence Measures based on RNN Sets}.
Various influence measures based on the RNN set have been studied.
Korn et al.~\cite{Korn2000} propose to use the size (or sum of weights) of RNN sets as the influence value.
To find the optimal points whose RNN sets are of the maximum size (influence), Cabello et al.~\cite{Cabello2010} propose the maximization problem MaxCov and they solve the problem by finding the depth of an arrangement of disks.
Wong et al.~\cite{Wong2011} solve MaxCov by the devised MaxOverlap algorithm.
Huang et al.~\cite{Huang2011} and Xia et al.~\cite{Xia2005} investigate finding such points in a given set.
Sun et al.~\cite{Sun:2012:topk,Sun2012cikm,Sun:2015:MBL} additionally consider the capacity constraints of such points and study how to achieve a global influence maximization instead of a local maximization. 
Qi et al.~\cite{qi2012min} define the influence based on the average distance between a point and its RNNs.
As RNNHM applies to a general measure, the RNNHM problem can be viewed as a generalized version of the above problems and therefore the solution of RNNHM can be adapted to solve these problems. However, their solutions do not apply to RNNHM, since the special properties exploited in these problems do not present in RNNHM.

\textbf{RNN Variations}.
RNNHM is a variant of the RNN query.
There are also many other studies on variations of the RNN query.
For instance,
Lu et al.~\cite{Lu2011} investigate reverse spatial and textual nearest neighbor queries, in which both location and textual descriptions are considered in the distance metric.
Similarly, Sun et al.~\cite{Sun:2015:knnta} consider temporal aggregates of location-based social network check-ins in the distance metric.
Zhang et al.~\cite{zhang2010hv} design indexes utilizing modern memory hierarchies to speed up such query processing. Ali et al.~\cite{ali2008motion} study approaches to continuous retrieval of the query objects.
She et al.~\cite{She:2015:USE,she2015conflict} devise algorithms to arrange social events to proper users using RNN sets.
These problems are quite different from RNNHM and the proposed algorithms cannot be adapted to solve the RNNHM problem.

\textbf{Sweep Line Strategy}.
%
%
The sweep line strategy is a quite generic approach to handling geometric objects. The Bentley--Ottmann (BO) algorithm uses this strategy to compute intersections of line segments~\cite{Bentley1979} or rectangles~\cite{Bentley1980}.
The BO algorithm and the proposed CREST algorithm compute very different problems and are different in many aspects. i) BO computes only pairwise intersections of line segments or rectangles, while CREST computes the overlaps and relative complements of multiple circles, squares, and axis-aligned line segments, which are much more challenging. ii) In order to efficiently compute the RNN sets, besides the line status, CREST need to memorize the RNN sets of previous events. This requires a delicate design to minimize the overhead and achieve optimal performance. BO does not have such optimization.
%

\section{Problem Formulation} \label{sec:proFor}

We first introduce basic concepts in Section~\ref{subsec:prelimi} and then reduce RNNHM to the Region Coloring (RC) problem in Section~\ref{subsec:equiPro}. Frequently used symbols are listed in Table~\ref{tab:symbol}.
\begin{table}[htb]
\vspace{-5pt}
\small
\centering
\renewcommand\arraystretch{1.2}
\caption{Frequently Used Symbols}\label{tab:symbol}
\shrink
\begin{tabular}{l|l}
\toprule
\textbf{Symbol} & \textbf{Meaning} \\
\midrule \midrule
$\mathcal{O}$ 		&	the set of clients	\\ \hline
$\mathcal{F}$		&	the set of facilities	\\ \hline
$n$			&	the number of data points in $\mathcal{O}$	\\	\hline
$\mathcal{C}(o_i)$	&	the NN-circle of $o_i \in \mathcal{O}$ \\ \hline
$\underline{x_i}$ (resp. $\underline{y_i}$) & the left (resp. lower) side of $\mathcal{C}(o_i)$ \\ \hline
$\overline{x_i}$ (resp. $\overline{y_i}$) & the right (resp. upper) side of $\mathcal{C}(o_i)$ \\ \hline
$e_l$		&	the $l$-th event	\\ \hline
$x_l$		&	the $x$-coordinate of $e_l$	\\ \hline
$I(l)$		&	the line status between $e_{l-1}$ and $e_l$ \\ \hline
$y_t$		&	the $t$-th element in a line status	\\ \hline
$\langle y_{t-1}, y_t \rangle$	&	two consecutive elements in a line status \\ \hline
$r_l^t$		&	the rectangle $[x_{l-1},x_l,y_{t-1},y_t]$ \\ \hline
$\mathcal{R}(\cdot)$	&	the RNN set of an object \\ 
\bottomrule
\end{tabular}
\shrink
\end{table}

\subsection{Preliminaries} \label{subsec:prelimi}

We consider two types of RNN queries: the bichromatic and monochromatic RNN queries.
In the former type, the data points and their NNs belong to two different sets $\mathcal{O}$ and $\mathcal{F}$.
In the latter type, they are from the same set, i.e., $\mathcal{O} = \mathcal{F}$.
Let $d(p,q)$ be the distance between two points $p$ and $q$.
We consider three different distance metrics: $L_{\infty}$, $L_1$, and $L_2$.
We start with solving the bichromatic RNNs with $L_{\infty}$ metric because the bichromatic type is generic and $L_{\infty}$ is simpler.

\textbf{RNN Query}.
In bichromatic RNNs, we are given two sets $\mathcal{O}$ and $\mathcal{F}$.
The set $\mathcal{O}$ can be considered as (the locations of) clients while $\mathcal{F}$ as (the locations of) facilities.
The clients find their NNs from the facility set.
The RNN set of a point $f$ in $\mathcal{F}$, denoted by $\mathcal{R}(f)$, consists of the points in $\mathcal{O}$ that have $f$ as their NN, i.e.,
$$\mathcal{R}(f) = \{ o \in \mathcal{O} \setsep \forall f' \in \mathcal{F} : d(o,f) \leq d(o,f') \}.$$
For a point $q$ not in $\mathcal{F}$, we obtain its RNN set $\mathcal{R}(q)$ by adding $q$ into the facility set $\mathcal{F}$ and computing $\mathcal{R}(q)$ as above.

\textbf{Nearest Neighbor Circle (NN-circle)}.
An \emph{NN-circle} of a point $o$, denoted by $\mathcal{C}(o)$, is a circle with $o$ being the center and the distance from $o$ to its NN being the radius. 
With the $L_{\infty}$ distance metric, the distance between two points is the maximum difference between their coordinates among all dimensions, i.e., $d(p,q) = \max\{|p_x - q_x|, |p_y - p_y|\}$ in a two-dimensional space, where the subscripts denote the coordinates in the $x$ and $y$ dimensions, respectively.
Hence, NN-circles are of a square shape.
(With $L_1$ and $L_2$, the NN-circles are of diamond and circular shapes, respectively.)
\begin{figure}[!t] 
\begin{minipage}[b]{0.48\linewidth}
\centering
\begin{overpic}[scale=.6]{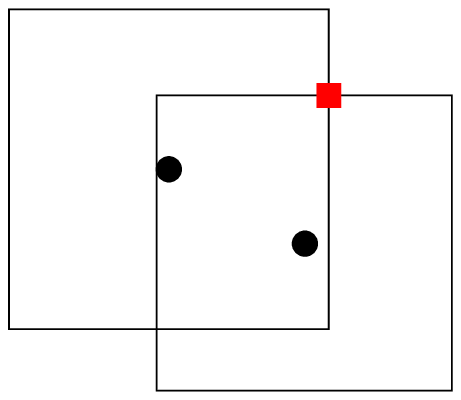}
\put(10,15){$\mathcal{C}(o_1)$}
\put(65,5){$\mathcal{C}(o_2)$}
\put(28,50){$o_1$} 
\put(68,35){$o_2$}
\put(70,70){$f_1$}
\end{overpic}
\vspace{-8pt}
\caption{NN-circles}
\label{fig:squareshape}
\end{minipage}
\begin{minipage}[b]{0.48\linewidth} 
\centering
\begin{overpic}[scale=.6]{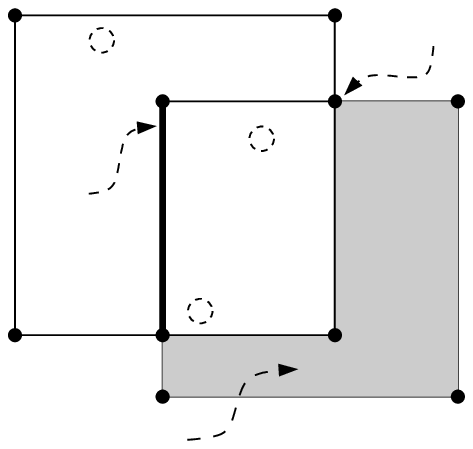}
\put(49,29){$q_1$}
\put(57,53){$q_2$}
\put(32,70){$q_3$}
\put(10,15){$\mathcal{C}(o_1)$}
\put(65,5){$\mathcal{C}(o_2)$}
\put(30,4){face}
\put(16,42){edge}
\put(71,77){vertex}
\end{overpic}
\vspace{-8pt}
\caption{Arrangement}
\label{fig:sameRNN}
\end{minipage}
\vspace{-1.5em}
\end{figure}
For example, in Fig.~\ref{fig:squareshape}, set $\mathcal{O}$ consists of two points $o_1$ and $o_2$. Set $\mathcal{F}$ consists of one point $f_1$. The NNs of $o_1$ and $o_2$ are both $f_1$.
The NN-circles $\mathcal{C}(o_1)$ and $\mathcal{C}(o_2)$ are the two squares.

\subsection{Problem Reduction} \label{subsec:equiPro}

Our goal is to draw the heat map of a given space based on the RNN sets of the points. In a continuous space, the number of points is infinite, which makes finding the RNN set of every point infeasible.
To overcome this, we reduce RNNHM to an equivalent problem Region Coloring (RC), which divides the space into regions and ``colors" (i.e., associates) every \emph{region} with a ``heat" (i.e., influence value).
We divide the space using the NN-circles as follows.
The \emph{arrangement} (i.e., layout) of the NN-circles, as illustrated in Fig.~\ref{fig:sameRNN}, forms a planar graph, which also induces a \emph{subdivision} of the space.
We use the notions in planar graphs such as \emph{vertices}, \emph{edges} and \emph{faces} (as illustrated in Fig.~\ref{fig:sameRNN}) in the arrangement directly.
In the arrangement, each face represents a unique \emph{region}, which is a maximal connected subset of the space that does not contain a vertex or an edge (e.g., the gray region in Fig.~\ref{fig:sameRNN}).
In each region, all the points have the same RNN set. 
If two points of a region have different RNN sets, there must exist at least one NN-circle that one point lies inside but the other does not; this means one side of the NN-circle must \emph{cut} the region, making it no longer a region by definition.
The RNN set of each point in the region consists of the centers of the NN-circles that \emph{enclose} the region.
For example, in Fig.~\ref{fig:sameRNN}, points $q_1$ and $q_2$ lie in the region enclosed by NN-circles $\mathcal{C}(o_1)$ and $\mathcal{C}(o_2)$, and they have the same RNN set $\set{o_1, o_2}$. For point $q_3$, its RNN set is $\set{o_1}$, which is different from that of $q_1$ or $q_2$. Therefore, $q_3$ must lie in a different region. Note that the opposite does not hold, i.e., different regions may have the same RNN set.
We formalize the above facts with the following proposition.
\begin{proposition}
The points in the same region of the subdivision formed by the arrangement of the NN-circles have the same RNN set.
\end{proposition}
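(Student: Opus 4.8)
The plan is to reduce the statement to a purely topological fact about the arrangement, after first re-expressing the RNN condition geometrically. First I would record the characterization that links the RNN set to the NN-circles: by the definitions of $\mathcal{R}(q)$ and $\mathcal{C}(o)$, a client $o \in \mathcal{O}$ belongs to $\mathcal{R}(q)$ exactly when $q$ lies within (or on) the NN-circle $\mathcal{C}(o)$. Indeed, $o$ has $q$ as its nearest neighbor (after $q$ is added to $\mathcal{F}$) iff $d(o,q) \le d(o,f')$ for every $f' \in \mathcal{F}$, i.e., iff $d(o,q)$ does not exceed the radius of $\mathcal{C}(o)$. Hence
$$\mathcal{R}(q) = \{\, o \in \mathcal{O} \setsep q \in \mathcal{C}(o) \,\},$$
and the proposition becomes equivalent to the claim that any two points of the same region are enclosed by \emph{precisely the same} collection of NN-circles.

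Next I would exploit the defining property of a region, namely that it is a maximal connected subset of the space containing no vertex or edge of the arrangement. The boundary $\partial \mathcal{C}(o_i)$ of every NN-circle is composed of edges (and, at their crossings, vertices) of the arrangement; consequently a region is disjoint from $\partial \mathcal{C}(o_i)$ for every $i$. It follows that on a region the indicator ``lies inside $\mathcal{C}(o_i)$'' is unambiguously defined and locally constant, so that either every point of the region lies inside $\mathcal{C}(o_i)$ or none does.

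To make this rigorous I would argue by contradiction along a connecting path. Take two points $q_1, q_2$ in a region $R$ and suppose $o_i \in \mathcal{R}(q_1)$ but $o_i \notin \mathcal{R}(q_2)$, i.e., $q_1 \in \mathcal{C}(o_i)$ while $q_2 \notin \mathcal{C}(o_i)$. Since $R$ is an open connected subset of the plane it is path-connected, so pick a path $\gamma \subseteq R$ from $q_1$ to $q_2$. The ``inside $\mathcal{C}(o_i)$'' indicator changes value along $\gamma$, which by an intermediate-value/connectedness argument forces $\gamma$ to meet the boundary $\partial \mathcal{C}(o_i)$; this contradicts $\gamma \subseteq R$ together with the fact that $R$ contains no edge or vertex. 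Applying the argument to every $i$ shows that $q_1$ and $q_2$ are enclosed by exactly the same NN-circles, and therefore $\mathcal{R}(q_1) = \mathcal{R}(q_2)$.

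The step I expect to be the main obstacle is the careful topological handling of the boundary: one must justify that any change in the enclosing indicator forces a crossing of some $\partial \mathcal{C}(o_i)$, and that these boundaries are entirely covered by the edges and vertices of the arrangement, so that such a crossing cannot occur inside a region. The closed-versus-open subtlety also deserves attention — the RNN condition $d(o,q) \le r_i$ is closed, whereas a region is open and hence meets no NN-circle boundary — so that the inside/outside dichotomy is unambiguous for every point of the region and the characterization above applies cleanly.
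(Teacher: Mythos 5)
Your proof is correct and follows essentially the same route as the paper, which argues that if two points of a region had different RNN sets, some NN-circle would enclose one but not the other, so a side of that NN-circle would have to \emph{cut} the region, contradicting the definition of a region as a maximal connected set containing no vertex or edge of the arrangement. Your path-connectedness argument and your handling of the closed NN-circle versus the open region simply make rigorous the ``must cut the region'' step that the paper states informally.
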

For RNNHM, each region can be used to represent all the points it contains.
To associate each point with a heat, it suffices to color each region with the heat of the points it contains.
Since the influence is computed straightforwardly based on the RNN set, in the following discussion, we do not distinguish the process of outputting the RNN set of a region and the process of computing and outputting the influence value. We will simply use the term ``labeling a region'' to denote the two processes.
Assuming that the NN-circles are already precomputed (there are efficient algorithms to compute and maintain the NN-circles~\cite{Korn2000}), we define the above region coloring problem as follows.
\begin{definition}[Region Coloring]
Given a set of NN-circles, Region Coloring is to label each region in the arrangement of the NN-circles based on the RNN set of any point contained in the region.
\end{definition}

\section{A Baseline Algorithm} \label{sec:enlighten}

\begin{figure}[!t] \label{fig:enlightening}
\begin{minipage}[b]{0.49\linewidth}
\centering
\begin{overpic}[scale=.58]{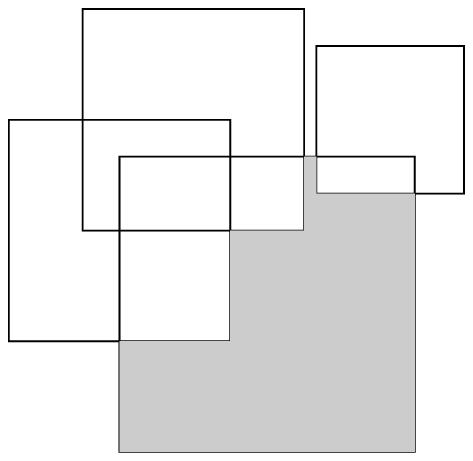}
\end{overpic}
\vspace{-8pt}
\caption{Bounding edge}
\label{fig:base_example}
\end{minipage}
\begin{minipage}[b]{0.49\linewidth} 
\centering
\begin{overpic}[scale=.58]{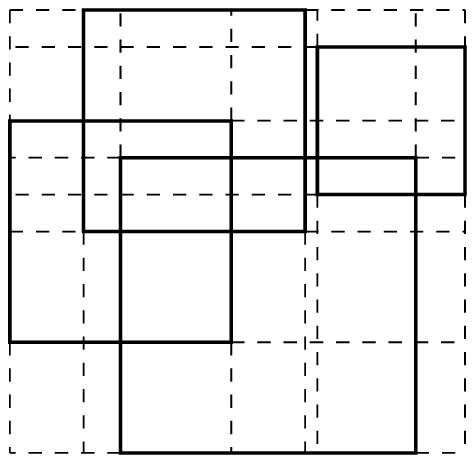}
\end{overpic}
\vspace{-8pt}
\caption{Side extension}
\label{fig:cut_base}
\end{minipage}
\vspace{-1.5em}
\end{figure}

A simple approach to the RC problem is to pick a point $p$ inside each region, use a \emph{point enclosure query} to obtain the NN-circles that enclose $p$, obtain the RNN set and label the region.
%
However, picking a point in each region is an expensive operation.
This is because it requires computing an exact representation of each region in the arrangement, which means every edge bounding a region needs to be computed (cf. Fig.~\ref{fig:base_example}) and hence has a very high complexity ($\mathcal{O}(n^2 \log n)$~\cite{de2008computational}).
To avoid such complicated computations, we extend the sides of each NN-circle to let them span across the whole arrangement, as shown in Fig.~\ref{fig:cut_base}.
By doing so we form a grid over the arrangement, where each grid cell can be easily located.
We scan the grid cells and compute the RNN set for the centroid of each cell, which solves the RC problem.
An alternative way is to use a regular grid where each cell has the same size. However, it is difficult to determine a proper cell size to guarantee that each cell falls in exactly one region unless each point is treated as a cell, which again is impossible to compute.
To efficiently compute the RNN set of a point, instead of checking each NN-circle to test whether it encloses a certain point, we build an index that supports point enclosure queries for the NN-circles. We use the \emph{S-tree}~\cite{Vaishnavi1982} for ease of analysis, although other spatial indexes such as the R-tree may be used.
\eat{Algorithm~\ref{alg:enlightening} summarizes the baseline algorithm. 
\begin{algorithm}[htb]
\DontPrintSemicolon
\small
\KwIn{An arrangement of $n$ NN-circles.}
\KwOut{A subdivision of the space with each region labeled.}
Construct an index $S$ on the NN-circles that supports point enclosure queries.\;
Sort the horizontal and vertical sides of the NN-circles.\;
Cut the regions into $m$ rectangular regions with extensions of each sides of the NN-circles.\;
\For{\emph{each of the $m$ rectangular regions}}{
	$c$ $\gets$ get the centroid of the region.\;
	$\mathcal{R}(c) \leftarrow$ get the NN-circles that enclose point $c$ with $S$.\;
	Label the rectangular region with $\mathcal{R}(c)$.\;
	}
\caption{Baseline algorithm} \label{alg:enlightening}
\end{algorithm}
}

\textbf{Algorithm Complexity}.
Let $n = |\mathcal{O}|$ denote the number of NN-circles, and $m$ denote the number of grid cells.
There are at most $2n$ extended sides vertically or horizontally, thus $m = O((2n)^2) = O(n^2)$.
To obtain the grid cells, it takes $O(2\times2n\log{2n}) = O(n\log{n})$ time to sort the sides. It then takes $O(n\log^2{n})$ time to build an S-tree index and $O(\log{n} + \alpha)$ time to process a point enclosure query~\cite{Vaishnavi1982}, where $\alpha$ is the number of NN-circles returned.
Let $\lambda$ be the maximum size of the RNN sets in the arrangement.
The time complexity of the baseline algorithm is $O(n\log^2{n} + m\log{n} + m\lambda)$. 
Since we consider a general influence measure, which can be any function with any computational cost, in the analysis we only count the number of times of influence computation, i.e., $m$ in the above complexity.
We further derive a bound for $m$ as follows. Let $r$ be the number of regions formed by $n$ NN-circles. It can be proved by \emph{Euler characteristic} that $r$ is between $\Theta(n)$ and $\Theta(n^2)$. In particular, when the $n$ NN-circles do not intersect with each other, $r = n+1 = \Theta(n)$; when the $n$ NN-circles are placed as shown in Fig.~\ref{fig:worst_case}, where they all have the same side length $n$ and the $i^{th}$ NN-circle is centered at point $(i,i)$, $r = n^2 - n + 2 = \Theta(n^2)$. Since $r \leq m$ and $m = O(n^2)$, we obtain $\Theta(r) \leq m \leq \Theta (n^2)$.

The S-tree index for point enclosure queries occupies the most space, which is
$O(n\log^2{n})$~\cite{Vaishnavi1982}.
Thus, the space complexity is $O(n\log^2{n})$.
We summarize the above analysis with the following theorem.
\begin{theorem}
The baseline algorithm for the RC problem stops in $O(n\log^2{n} + m\log{n} + m\lambda)$ time and uses $O(n\log^2{n})$ space, where $m$ is the number of grid cells and $\lambda$ is the maximum size of the RNN sets.
\end{theorem}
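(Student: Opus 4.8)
The plan is to prove the stated time and space bounds by a direct accounting of each phase of the baseline algorithm, since the per-operation costs of the S-tree and of sorting have already been established above (and cited from~\cite{Vaishnavi1982}). I would separate the work into a one-time preprocessing stage and the main labeling loop, bound each separately, and then sum. The space bound will follow by identifying the single dominant structure.

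For the time bound, I would first account for preprocessing: sorting the $2n$ vertical and $2n$ horizontal sides costs $O(n\log n)$, while constructing the S-tree that answers point enclosure queries costs $O(n\log^2 n)$, so the latter dominates the preprocessing. Next I would analyze the loop. Because there are at most $2n$ extended sides in each orientation, they partition the plane into $m = O((2n)^2) = O(n^2)$ grid cells, each locatable in constant time from the sorted side coordinates. For each of the $m$ cells the algorithm (i) computes the centroid $c$ in $O(1)$ time, (ii) issues one point enclosure query that reports the NN-circles enclosing $c$ in $O(\log n + \alpha)$ time, where $\alpha$ is the number of circles returned, and (iii) performs one influence computation to label the cell. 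The key step is to charge the reporting term correctly: the circles enclosing $c$ are exactly the RNN set of $c$, so $\alpha \le \lambda$ by the definition of $\lambda$ as the maximum RNN-set size, and hence the total query cost over all cells is $O(m\log n) + \sum_{i}\alpha_i \le O(m\log n + m\lambda)$. Following the paper's convention of counting influence computations rather than charging them a fixed cost, the labeling contributes $m$ such computations, which is subsumed in the output term. Summing the preprocessing and the loop yields $O(n\log^2 n) + O(n\log n) + O(m\log n + m\lambda) = O(n\log^2 n + m\log n + m\lambda)$, as claimed.

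For the space bound, I would observe that the dominant structure is the S-tree, occupying $O(n\log^2 n)$ space~\cite{Vaishnavi1982}; the sorted side arrays use only $O(n)$ space, and each cell is labeled and emitted without retaining all cells simultaneously, so the S-tree dominates and the total space is $O(n\log^2 n)$.

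The analysis is essentially a routine aggregation of already-established per-operation costs, so I do not anticipate a deep obstacle. The only point requiring genuine care is the inequality $\alpha \le \lambda$ on the reported set size, which is what ties the output-sensitive enclosure-query cost to the parameter $\lambda$ and prevents $\sum_i \alpha_i$ from exceeding $m\lambda$; a secondary subtlety is respecting the convention that the influence computation is counted rather than assigned a constant cost, so that the bound remains valid for an arbitrary influence measure.
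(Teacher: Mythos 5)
Your proposal is correct and follows essentially the same route as the paper's own analysis: it accounts for the $O(n\log n)$ sorting and $O(n\log^2 n)$ S-tree construction as preprocessing, sums the per-cell point enclosure cost $O(\log n + \alpha)$ over the $m = O((2n)^2) = O(n^2)$ grid cells, counts the $m$ influence computations under the paper's convention, and lets the S-tree's $O(n\log^2 n)$ footprint dominate the space bound. Your only refinement is making explicit the inequality $\alpha \le \lambda$ (the circles enclosing a cell's centroid are precisely its RNN set), which the paper uses implicitly when converting the output-sensitive query cost into the $m\lambda$ term.
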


\textbf{Limitations of the algorithm}.
One drawback of the baseline algorithm is that it needs to process point enclosure queries, which is reflected in the $n\log^2{n}$ and $m\log{n}$ terms in the time complexity.
Another drawback is that it further divides the regions into multiple grid cells, which means a region in the original arrangement will be labeled multiple times. The number of these grid cells $m$ may increase quadratically with the increase of $n$ (closer to $\Theta(n^2)$).
A large $m$ means we need to process a large number of point enclosure queries and label a large number of grid cells, which significantly deteriorates the efficiency.
We aim to reduce $m$ (the number of times of region labeling) to the number of regions in the arrangement, which is optimal in the RC problem.
Therefore, we have two directions for improvements: (i) to avoid point enclosure queries, and (ii) to reduce the number of times region labeling.
We present our CREST algorithm which achieves these two goals in the following section.

\begin{figure}[!t]
\begin{minipage}[b]{0.49\linewidth}
\centering
\begin{overpic}[scale=.11]{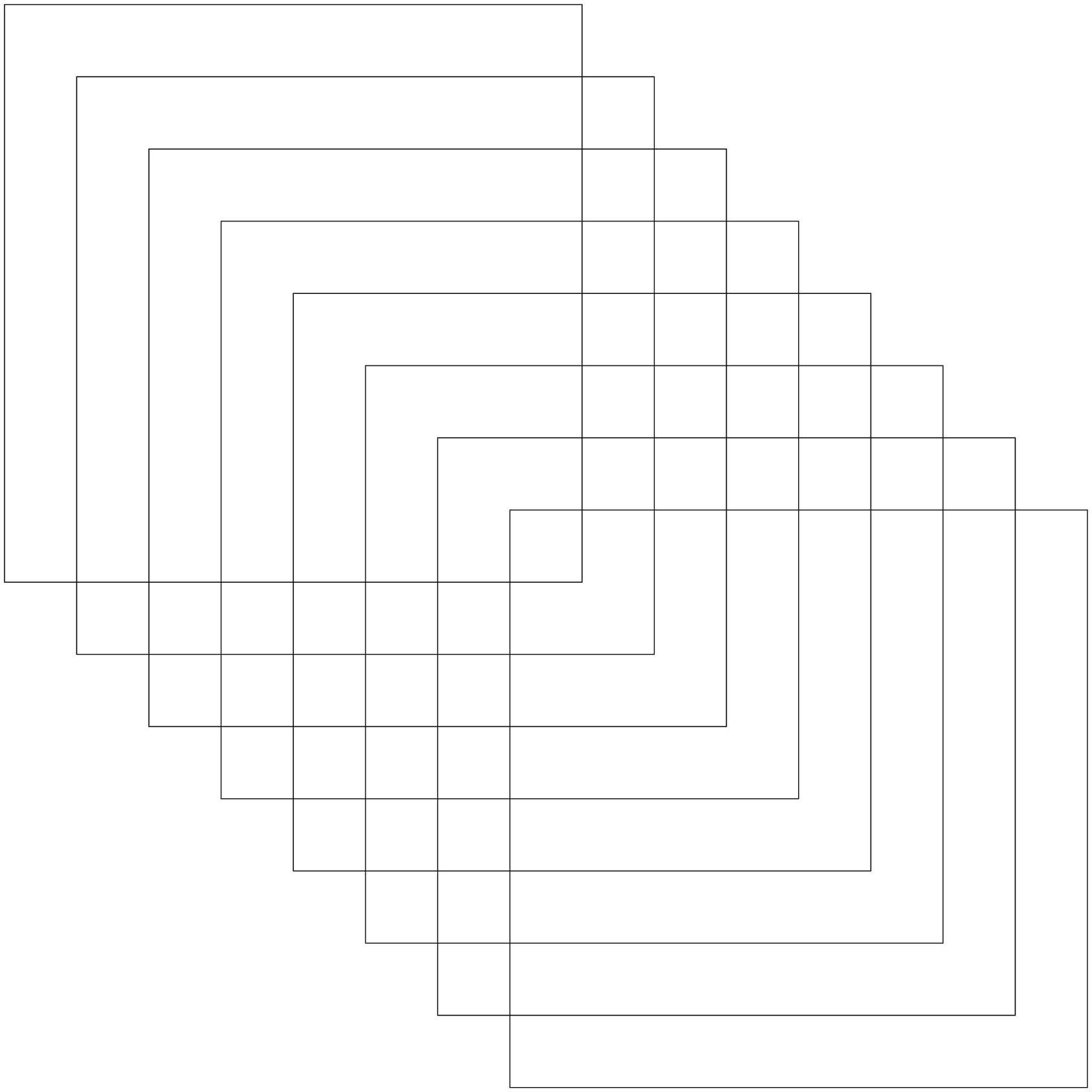}
\end{overpic}
\vspace{-8pt}
\caption{Worst case}
\label{fig:worst_case}
\end{minipage}
\begin{minipage}[b]{0.49\linewidth} 
\centering
\begin{overpic}[scale=.55]{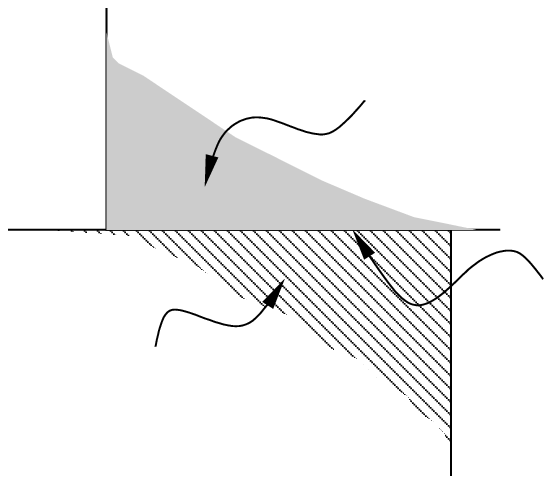}
\put(10,15){$\set{o_1,o_2}$} \put(90,30){$\ov{y_2},\un{y_3}$} \put(65,70){$\set{o_1,o_3}$}
\end{overpic}
\vspace{-8pt}
\caption{Set changing}
\label{fig:adjacent}
\end{minipage}
\vspace{-1.5em}
\end{figure}

\section{The CREST Algorithm} \label{sec:sweepalgo}

We employ the classic \emph{sweep line} strategy~\cite{Bentley1979,de2008computational} (cf. Section~\ref{sec:relatedwork}) to avoid forming a large number of cells to be labeled as done by the grid dividing strategy.
We let a line sweep from the left to the right of the space, and store information about the NN-circles that are currently \emph{cut} by the sweep line.
We call such information the \emph{line status}, and say that an \emph{event} is triggered when the line status changes.
As illustrated in Fig.~\ref{fig:sweep}, we use the \emph{distinct} vertical sides of the NN-circles as event points (i.e., $x_1,x_2,\ldots,x_9$), and the \emph{sorted} horizontal sides of the NN-circles as the line status.
Every pair of adjacent vertical sides and horizontal sides forms a \emph{subregion} to be labeled.
We notice that some of these subregions come from the same original region formed by the NN-circles, and hence do not require the RNN set and influence computations repetitively.
We use the \emph{change intervals} to avoid labeling such regions multiple times.
We avoid the RNN computation with point enclosure queries by utilizing the fact that the RNN set of a region can be obtained efficiently by modifying the RNN sets of the adjacent regions.
For example, in Fig.~\ref{fig:adjacent}, if the RNN set of the lower region is $\set{o_1,o_2}$ and the boundary between the two regions is formed by the upper and lower sides of $\mathcal{C}(o_2)$ and $\mathcal{C}(o_3)$, denoted by $\ov{y_2}$ and $\un{y_3}$, respectively, then we can immediately obtain the RNN set $\set{o_1,o_3}$ of the upper region by removing $o_2$ from $\set{o_1,o_2}$ and then adding $o_3$ to $\set{o_1}$.
We call the already-computed RNN sets of adjacent regions the \emph{base sets} and cache them for obtaining the RNN sets of newly swept regions.
We also devise techniques to constrain the number of cached base sets.
Powered by these techniques, we achieve a highly efficient algorithm to the RNNHM/RC problem, which is proved to be asymptotically optimal in many cases (cf. Section~\ref{sec:analysis}).
Next we detailed these techniques.

\begin{figure}[tb]
\centering
\begin{overpic}[scale=.9]{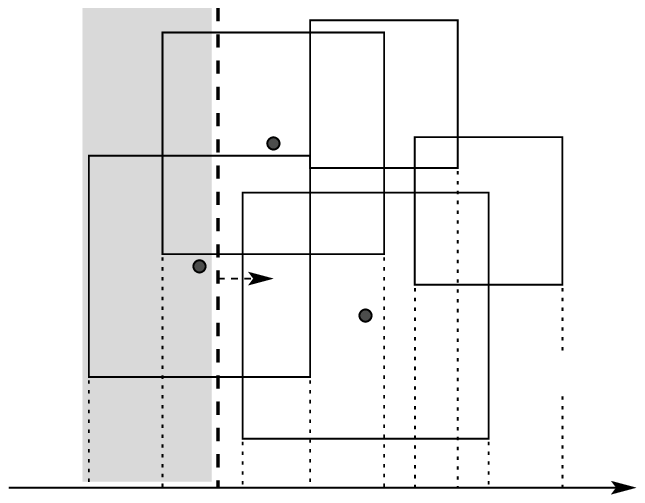}
\put(15,4){$x_1$} \put(26,4){$x_2$} \put(38,4){$x_3$} \put(47,4){$x_4$} \put(57,4){$x_5$} \put(62.5,4){$x_6$}
\put(68.5,4){$x_7$} \put(74,4){$x_8$} \put(84,4){$x_9$} \put(30,9){$s_x$} \put(30,26){$\un{y_1}$} \put(30,43.5){$\un{y_2}$} \put(30,57){$\ov{y_1}$} \put(30,75){$\ov{y_2}$}
\put(4,37){$\mathcal{C}(o_1)$} \put(15,62){$\mathcal{C}(o_2)$} \put(76,23){$\mathcal{C}(o_3)$}
\put(29,35.5){$o_1$} \put(38.5,59){$o_2$} \put(54,28){$o_3$}
\put(44.5,17){$\un{y_3}$} \put(44.5,26){$\un{y_1}$} \put(44.5,43.5){$\un{y_2}$} \put(44.5,51){$\ov{y_3}$} \put(44.5,57){$\ov{y_1}$} \put(44.5,75){$\ov{y_2}$}
\end{overpic}
\shrink
\caption{Example of events and line status}
\label{fig:sweep}
\shrink
\vspace{-5pt}
\end{figure}

\subsection{Concepts and Notation} \label{subsec:notions}


\textbf{Events}.
Let $\un{x_i}$ (resp. $\un{y_i}$) and $\ov{x_i}$ (resp. $\ov{y_i}$) be the $x$- (resp. $y$-) coordinates of the left and right (resp. lower and upper) sides of NN-circle $\mathcal{C}(o_i)$, respectively. The \emph{distinct $x$-coordinates} of the vertical sides (of all the NN-circles) are stored in ascending order in a queue, which is called the \emph{event queue} and denoted by $\mathcal{Q}_e$. The elements in $\mathcal{Q}_e$ are called the \emph{events} or \emph{event points}.
For convenience, we refer to the coordinates of the sides simply as sides when the context is clear.
We denote the $l^{th}$ event (i.e., the $l^{th}$ ejected element from $\mathcal{Q}_e$) by $e_l$ and the $x$-coordinate of $e_l$ by $x_l$.
Note the difference between an event coordinate $x_i$ and the side coordinates of $\mathcal{C}(o_i)$ (i.e., $\ov{x_i}$ or $\un{x_i}$).
They may have an equal value, but different semantics.

\textbf{Line Statuses}.
Let $s_x$ be the x-coordinate of the sweep line. We say that the line \emph{cuts} NN-circle $\mathcal{C}(o_i)$ if and only if $s_x \in (\underline{x_i}, \overline{x_i}]$ (i.e., it is in the horizontal range of $\mathcal{C}(o_i)$).
By definition, the NN-circles that are cut by the line remain the same between two consecutive events (including their positions).
Let $\mathcal{C}(o_{l_1}), \mathcal{C}(o_{l_2}), \ldots, \mathcal{C}(o_{l_{n(l)}})$ be the $n(l)$ NN-circles cut by the line when it sweeps from $e_{l-1}$ to $e_l$.
We sort the horizontal sides (not only coordinates) $\un{y_{l_1}}, \ov{y_{l_1}}$, $\un{y_{l_2}}, \ov{y_{l_2}}$, $\ldots$, $\un{y_{l_{n(l)}}}, \ov{y_{l_{n(l)}}}$ of these NN-circles in ascending order (ties are broken arbitrarily), and use the sorted list as the line status between events $e_{l-1}$ and $e_l$, which is denoted by $\mathcal{I}(l) = \|\un{y_{l_a}}, \ldots, \ov{y_{l_b}}, \ldots \un{y_{l_c}}, \ldots, \ov{y_{l_d}}\|, \; a,b,c,d \in \{1, 2, \ldots, n(l)\}$.
For example, in Fig.~\ref{fig:sweep}, the current line status is $\mathcal{I}(3) = \| \un{y_1}, \un{y_2}, \ov{y_1}, \ov{y_2} \|$.
For convenience, we denote by $y_i$ the $i^{th}$ element in the line status, and hence the line status between $e_{l-1}$ and $e_l$ is
$$\mathcal{I}(l) = \| y_1, y_2, \ldots, y_{2n(l)} \|, \;\; y_1 \leq y_2 \leq \ldots \leq y_{2n(l)}.$$

\textbf{Pair and Subregion}.
Any two consecutive elements in the line status is termed as a \emph{pair}, which is denoted by $\langle y_{t-1}, y_t \rangle$.
We denote by $\langle y_{t-1}, y_t \rangle \in \mathcal{I}(l)$ that the pair $\langle y_{t-1}, y_t \rangle$ comes from the line status $\mathcal{I}(l)$.
We denote by $[x,x',y,y']$ a rectangle whose diagonally opposite corners are $(x,y)$ and $(x',y')$ with $x < x'$ and $y \leq y'$.
When $y = y'$, $[x,x',y,y']$ is in fact a horizontal line segment. For ease of presentation, we treat it as a special rectangle.
We denote by $(x_0,y_0) \in [x,y,x',y']$ that point $(x_0,y_0)$ is in rectangle $[x,y,x',y']$.
Here the rectangle is \emph{open} (i.e., $(x_0,y_0) \in [x,y,x',y']$ iff $x_0 \in (x,x')$ and $y_0 \in (y,y')$) and no point is in the special rectangle.
When the line sweeps from $e_{l-1}$ to $e_l$, the x-coordinate $x_{l-1}$ of event $e_{l-1}$ is strictly less than that of $e_l$.
This forms a rectangle $[x_{l-1}, x_l, y_1, y_{2n(l)}]$ between the two events.
In this rectangle, each pair $\langle y_{t-1}, y_t \rangle \in \mathcal{I}(l)$ forms a \emph{small} rectangle $[x_{l-1}, x_l, y_{t-1}, y_t]$.
The small rectangle has no vertex or edge in it, which makes it a connected subset of a region. We call each small rectangle a \emph{subregion}, and denote by $r_l^t$ the one formed by pair $\langle y_{t-1}, y_t \rangle \in \mathcal{I}(l)$.
We denote by $\mathcal{R}(r_l^t)$ the RNN set of the points in subregion $r_l^t$, or simply by $\mathcal{R}(\tuple{y_{t-1}, y_t})$ when the line status is clear.

\subsection{Avoiding Point Enclosure Queries} \label{subsec:avoidpeq}

We obtain the RNN set of each subregion by finding the NN-circles enclosing it.
When the line sweeps from $e_{l-1}$ to $e_l$, the subregions between $e_{l-1}$ and $e_l$ are enclosed by the NN-circles in the $x$ dimension if and only if these NN-circles are cut by the line.
Therefore, we only need to check whether these NN-circles enclose the subregions in the $y$ dimension, which can be easily achieved by checking the line status.
We use the following lemma to show the RNN set of a pair in the line status.
Due to space limitation, we omit the proofs of the lemmas in this section.
\begin{lemma}
\label{lem:RNN}
$\forall \langle y_{t-1}, y_t \rangle \in \mathcal{I}(l)$, the RNN set $\mathcal{R}(r_l^t)$ of subregion $r_l^t = [x_{l-1}, x_l, y_{t-1}, y_t]$ is an empty set if $y_{t-1} = y_t$ or a set consists of the centers of the NN-circles that are cut by the line and enclose $r_l^t$ in the $y$ dimension, i.e., $\mathcal{R}(r_l^t)$ is
\[ \left\{
\begin{array}{l l}
\varnothing & \text{if }\; y_{t-1} = y_t,\\
\{o_i \setsep \un{x_i} < x_l \leq \ov{x_i} \text{ and } \un{y_i} \leq y_{t-1} < y_t \leq \ov{y_i} \} & \text{if }\; y_{t-1} \neq y_t.
\end{array}
\right. \]
\end{lemma}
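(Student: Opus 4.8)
The plan is to split on whether the pair is degenerate ($y_{t-1}=y_t$) and, in the non-degenerate case, to rewrite the geometric notion of \emph{enclosure} as coordinate inequalities, exploiting the fact that the set of NN-circles cut by the line does not change between two consecutive events. First I would dispose of the degenerate case $y_{t-1}=y_t$: the subregion $r_l^t=[x_{l-1},x_l,y_{t-1},y_t]$ is then a special (horizontal-segment) rectangle, which by the open-rectangle convention adopted in Section~\ref{subsec:notions} contains no point. Since $\mathcal{R}(r_l^t)$ is the common RNN set of the points lying in $r_l^t$ and there are none, the set is empty, matching the first branch of the claim.

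For the main case $y_{t-1}\neq y_t$, I would first invoke the characterization established in Section~\ref{subsec:equiPro}: $r_l^t$ has no vertex or edge in its interior, hence it lies in a single region, and by Proposition~1 together with the fact that the RNN set of a point equals the set of centers of the NN-circles enclosing its region, we obtain
$$\mathcal{R}(r_l^t)=\set{o_i \setsep \mathcal{C}(o_i)\text{ encloses }r_l^t}.$$
It then remains to translate ``$\mathcal{C}(o_i)$ encloses $r_l^t$'' into the stated inequalities. Because $\mathcal{C}(o_i)$ is the axis-aligned square $[\un{x_i},\ov{x_i}]\times[\un{y_i},\ov{y_i}]$, enclosure of $r_l^t$ is equivalent to containment in each dimension separately: $\un{x_i}\le x_{l-1}$ and $x_l\le\ov{x_i}$ in the $x$-dimension, and $\un{y_i}\le y_{t-1}$ and $y_t\le\ov{y_i}$ in the $y$-dimension. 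The $y$-part is already exactly $\un{y_i}\le y_{t-1}<y_t\le\ov{y_i}$ once we append the case hypothesis $y_{t-1}<y_t$, so no further work is needed there.

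The crux is the $x$-dimension, where the claim replaces the interval containment ``$\un{x_i}\le x_{l-1}$ and $x_l\le\ov{x_i}$'' by the single-coordinate cutting condition ``$\un{x_i}<x_l\le\ov{x_i}$''. I would argue the two are equivalent using the definition of events: since the event queue $\mathcal{Q}_e$ stores exactly the \emph{distinct} $x$-coordinates of all vertical sides, no vertical side lies strictly between $x_{l-1}$ and $x_l$. Hence if $\un{x_i}<x_l$, then $\un{x_i}$ is itself an event coordinate strictly smaller than the $l$-th one, so it coincides with some earlier event and therefore $\un{x_i}\le x_{l-1}$; combined with $x_l\le\ov{x_i}$ taken verbatim, this gives containment, and the converse is immediate from $x_{l-1}<x_l$. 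Equivalently, this is precisely the invariance of the cut circles between consecutive events noted in Section~\ref{subsec:notions}: $\mathcal{C}(o_i)$ is cut throughout $(x_{l-1},x_l]$ iff it is cut at $x_l$. Combining the $x$- and $y$-characterizations yields the second branch of the claim.

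I expect the main obstacle to be making this $x$-dimension reduction rigorous, since the stated condition references only $x_l$: one must use the discreteness of the event set (no vertical side between two consecutive events) to recover the missing $\un{x_i}\le x_{l-1}$ bound. The remaining steps---the degenerate case and the $y$-dimension---are direct consequences of the enclosure definition and the open-rectangle convention, and do not require further calculation.
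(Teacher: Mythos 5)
Your proposal is correct and follows essentially the same route as the paper, which omits the formal proof for space but states its key idea inline in Section~\ref{subsec:avoidpeq}: subregions between $e_{l-1}$ and $e_l$ are enclosed by an NN-circle in the $x$ dimension if and only if that circle is cut by the line, leaving only the $y$-dimension check against the line status. Your added details---the open-rectangle convention disposing of the degenerate pair, Proposition~1 plus the fact that $r_l^t$ contains no vertex or edge to reduce to enclosure, and the discreteness of consecutive event coordinates to recover $\un{x_i} \leq x_{l-1}$ from $\un{x_i} < x_l$---are exactly the right formalization of that sketch.
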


By Lemma~\ref{lem:RNN}, we can obtain the RNN set $\mathcal{R}(\tuple{y_{t-1}, y_t})$ of a pair as follows.
When $y_{t-1} = y_t$, the RNN set is empty.
For convenience, we call such pairs \emph{invalid} pairs and the others (with $y_{t-1} < y_t$) \emph{valid} pairs.
For a valid pair, we check the elements in the line status in the range of
$(-\infty, y_{t-1}]$.
%
Since the elements are sorted in ascending order, $y_{t-1}$ (resp. $y_t$) of a valid pair must be the last (resp. first) element among elements of the same value.
Thus, we only need to check elements from the beginning of the line status to the first element (inclusive) of the pair.
Starting with an empty set, which is called the \emph{base set} and denoted by $\mathcal{R}$, if an element is a lower side, we add the center of the corresponding NN-circle to $\mathcal{R}$, otherwise we remove the center from $\mathcal{R}$.
When reaching the second element (exclusive) of the pair, we stop and $\mathcal{R}$ is the RNN set of the pair.
For example, in Fig.~\ref{fig:sweep}, the line status is $\mathcal{I}(3) = \|\un{y_1}, \un{y_2}, \ov{y_1}, \ov{y_2} \|$. For pair $\tuple{y_1,y_2}$, $\un{y_1}$ is the only element we encountered in the checking range and hence $\mathcal{R}(\tuple{y_1,y_2}) = \set{o_1}$.
We formally describe the above approach with the following corollary.
\begin{corollary}
\label{coro:RNN}
$\forall \in \langle y_{t-1}, y_t \rangle \in \mathcal{I}(l)$ with $y_{t-1} \neq y_t$, the RNN set $\mathcal{R}(r_l^t)$ of a subregion $r_l^t = [x_{l-1}, x_l, y_{t-1}, y_t]$ can be obtained by checking elements $y_i$ for $i=1$ to $t-1$ and maintaining the set $\mathcal{R}(r_l^t)$ as follows
\[ \left\{
\begin{array}{l l}
o_k \text{ is removed from } \mathcal{R}(r_l^t) & \text{if } y_i \text{ is } \ov{y_k},\\
o_k \text{ is added into } \mathcal{R}(r_l^t) & \text{if } y_i \text{ is } \un{y_k}.
\end{array}
\right. \]
\end{corollary}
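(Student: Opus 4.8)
The plan is to connect the incremental add/remove semantics of the scan described in the corollary directly to the explicit characterization of $\mathcal{R}(r_l^t)$ furnished by Lemma~\ref{lem:RNN}. First I would fix a valid pair $\langle y_{t-1}, y_t \rangle \in \mathcal{I}(l)$, so that $y_{t-1} < y_t$, and track for an arbitrary circle $\mathcal{C}(o_i)$ in the current line status whether $o_i$ survives in the running set $\mathcal{R}$ after the elements $y_1, \ldots, y_{t-1}$ have been processed. Since $\mathcal{C}(o_i)$ contributes exactly the two sides $\un{y_i}$ and $\ov{y_i}$ to $\mathcal{I}(l)$, and since every NN-circle has positive radius so that $\un{y_i} < \ov{y_i}$ strictly, the element $\un{y_i}$ necessarily occupies an earlier position than $\ov{y_i}$ in the sorted status. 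Consequently the scan inserts $o_i$ before it could ever delete it, and a short case analysis yields the clean criterion: $o_i \in \mathcal{R}$ after processing $y_{t-1}$ if and only if $\un{y_i}$ lies among $y_1, \ldots, y_{t-1}$ while $\ov{y_i}$ does not, i.e. $\ov{y_i}$ sits at a position $\geq t$. Insertion without deletion is the only configuration leaving $o_i$ present; insertion followed by deletion cancels, and non-insertion leaves $o_i$ absent.

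Next I would translate these statements about list positions into statements about coordinate values, which is where the validity of the pair does the real work. I would invoke the observation recorded just before the corollary, namely that for a valid pair $y_{t-1}$ is the last element sharing its value and $y_t$ is the first element sharing its value; this holds precisely because $y_{t-1} < y_t$ are consecutive in the sorted list, so no intermediate value exists. From this, every element at a position $\leq t-1$ has value $\leq y_{t-1}$, and conversely every element of value $\leq y_{t-1}$ occupies a position $\leq t-1$; symmetrically, a position $\geq t$ corresponds exactly to a value $\geq y_t$. Hence ``$\un{y_i}$ appears in $y_1, \ldots, y_{t-1}$'' is equivalent to $\un{y_i} \leq y_{t-1}$, and ``$\ov{y_i}$ appears at a position $\geq t$'' is equivalent to $\ov{y_i} \geq y_t$.

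Combining the two steps, $o_i$ is retained by the scan exactly when $\un{y_i} \leq y_{t-1}$ and $\ov{y_i} \geq y_t$, i.e. $\un{y_i} \leq y_{t-1} < y_t \leq \ov{y_i}$, which is precisely the $y$-dimension condition of Lemma~\ref{lem:RNN}. The accompanying $x$-dimension condition $\un{x_i} < x_l \leq \ov{x_i}$ is automatically met, because we only ever scan circles cut by the sweep line between $e_{l-1}$ and $e_l$, which by definition are exactly those enclosing the subregion in the $x$ dimension. Therefore the set produced by the scan coincides with $\mathcal{R}(r_l^t)$, establishing the corollary. I expect the main obstacle to be the careful bookkeeping around ties: one must argue that equal-valued sides are resolved consistently so that the position/value equivalences above are exact, and that the strict inequality $\un{y_i} < \ov{y_i}$ guarantees an insertion always precedes its matching deletion within the scan window. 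Both hinge on the hypothesis $y_{t-1} < y_t$, so verifying that this validity assumption is actually used at each tie-breaking step is the crux of the argument.
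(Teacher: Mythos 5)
Your proof is correct and follows essentially the same route as the paper, which justifies the corollary by the informal argument preceding it: reduce the scan's add/remove semantics to the enclosure condition $\un{y_i} \leq y_{t-1} < y_t \leq \ov{y_i}$ of Lemma~\ref{lem:RNN}, using the sorted order (so $\un{y_i}$ precedes $\ov{y_i}$) and the fact that validity of the pair makes $y_{t-1}$ the last and $y_t$ the first element of their respective values, with the $x$-condition automatic for circles cut by the sweep line. Your explicit handling of the position-to-value translation and of ties merely makes rigorous what the paper leaves as a sketch.
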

It is easy to observe that if we have obtained the RNN set $\mathcal{R}(\tuple{y_{t-1}, y_t})$ of a valid pair, we can start from $y_t$ (which is the first element among elements of the same value) and use $\mathcal{R}(\tuple{y_{t-1}, y_t})$ as the base set for the valid pair $ \tuple{y_{t'-1}, y_{t'}}$ immediately next to it.
In this way, we can obtain the RNN set of \emph{every} valid pair (in one line status) with a single traversal of the line status.
Continuing with the above example in Fig.~\ref{fig:sweep}, for pair $\tuple{y_2,y_3}$, we use $\mathcal{R}(\tuple{y_1,y_2}) = \set{o_1}$ as base set, encounter $\un{y_2}$, add $o_2$, and stop with $\mathcal{R}(\tuple{y_2,y_3}) = \set{o_1,o_2}$. For pair $\tuple{y_3,y_4}$, we remove $o_1$ from $\set{o_1,o_2}$ and stop with $\mathcal{R}(\tuple{y_3,y_4}) = \set{o_2}$.

\subsection{Reducing the Number of Times of Region Labeling} \label{subsec:reducelabellingtimes}

\subsubsection{Locating the Change Interval}
With the above approach, we obtain the RNN sets and label the corresponding regions 
between two events $e_{l-1}$ and $e_l$.
We then move the sweep line forward across $e_l$ and label regions between $e_l$ and $e_{l+1}$. 
Crossing $e_l$, we obtain a new line status $\mathcal{I}(l+1)$.
We notice that some of the pairs in $\mathcal{I}(l)$ and $\mathcal{I}(l+1)$ represent the same regions (not subregions) even though they are formed by different NN-circles.
For example, in Fig.~\ref{fig:sweep}, between $e_2$ and $e_3$, $\mathcal{I}(3) = \|\un{y_1}, \un{y_2}, \ov{y_1}, \ov{y_2} \|$, while between $e_3$ and $e_4$, $\mathcal{I}(4) = \|\un{y_3}, \un{y_1}, \un{y_2}, \ov{y_3}, \ov{y_1}, \ov{y_2} \|$. The pair $\tuple{\un{y_2}, \ov{y_1}} \in \mathcal{I}(3)$ and new pair $\tuple{\ov{y_3}, \ov{y_1}} \in \mathcal{I}(4)$ represent the same region.
Besides new pairs, a pair also represents the same region if it exists in both $\mathcal{I}(l)$ and $\mathcal{I}(l+1)$ and the RNN sets of the pair in the two line statuses are the same (e.g., $\tuple{\ov{y_1}, \ov{y_2}}$ in the above example). 
The reason is that, by Lemma~\ref{lem:RNN}, the RNN set of a pair is changed if and only if the pair is entirely enclosed by an NN-circle that is inserted into (i.e., newly cut) or removed from (i.e., no longer cut by) the line. When the RNN set of a pair does not change, the two subregions formed by the pair must be connected (and hence represent the same region), since no side of NN-circles separates them.
To reduce the number of times of region labeling, we should avoid processing pairs representing the same regions, i.e., only \emph{some} of the newly formed pairs and the pairs that exist in both line status whose RNN sets are changed should be processed.
%
We use the following lemma to precisely locate the pairs that need to be processed when only one NN-circle is changed in (i.e., inserted into or removed from) the line.
\begin{lemma}
\label{lem:insertNdelete}
When a line status $\mathcal{I}(l)$ is changed into a new line status $\mathcal{I}(l')$ because an NN-circle $\mathcal{C}(o_c) = [\un{x_c}, \ov{x_c}, \un{y_c}, \ov{y_c}]$ is newly or no longer cut by the sweep line, i.e., $\un{y_c}$ and $\ov{y_c}$ are inserted into or removed from $\mathcal{I}(l)$, we only need to process the pairs in the following set
$$\{ \langle y_{t-1}, y_t \rangle \in I(l') \setsep \un{y_c} \leq y_{t-1} < y_t \leq \ov{y_c} \}.$$
\end{lemma}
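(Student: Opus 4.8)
The plan is to characterize the pairs that genuinely require labeling as exactly those whose subregion is \emph{encountered for the first time} at the event $e_l$, and then to identify this ``first-encounter'' set with $P=\{\langle y_{t-1},y_t\rangle\in\mathcal{I}(l')\setsep\un{y_c}\leq y_{t-1}<y_t\leq\ov{y_c}\}$. The starting point is a geometric observation: under the single-circle hypothesis of the lemma, the only vertical side lying on the sweep line $x=x_l$ comes from $\mathcal{C}(o_c)$ — its left side $\un{x_c}=x_l$ when $\mathcal{C}(o_c)$ is newly cut, or its right side $\ov{x_c}=x_l$ when it is no longer cut — and in either case this side is the segment $\{x_l\}\times[\un{y_c},\ov{y_c}]$. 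Any circle other than $\mathcal{C}(o_c)$ that had a side at $x_l$ would itself change cut-status, contradicting the hypothesis, so this segment is the \emph{only} edge on the line and hence the only obstruction to connectivity across the event.

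First I would make ``need to process'' precise. Since every region is labeled exactly once (at its leftmost appearance), a pair requires processing iff its subregion $r_{l'}^t=[x_l,x_{l'},y_{t-1},y_t]$ lies in a region not yet labeled, i.e.\ iff $r_{l'}^t$ is \emph{not} connected across $x=x_l$ to any subregion of the previous strip $(x_{l-1},x_l)$. Because all points of one region share the same RNN set (the proposition of Section~\ref{subsec:equiPro}), connectivity to the previous strip is exactly the condition ``already labeled with the correct set.'' The crux then reduces to a one-dimensional statement: the left boundary of $r_{l'}^t$ is the segment $\{x_l\}\times(y_{t-1},y_t)$, and $r_{l'}^t$ connects leftward iff this segment is not entirely covered by the edge $\{x_l\}\times[\un{y_c},\ov{y_c}]$, i.e.\ iff $(y_{t-1},y_t)\not\subseteq(\un{y_c},\ov{y_c})$.

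I would then settle the two directions. For a pair outside $P$ we have $y_{t-1}<\un{y_c}$ or $y_t>\ov{y_c}$, so $(y_{t-1},y_t)$ contains a height $y^\ast\notin[\un{y_c},\ov{y_c}]$; at that height $(x_l,y^\ast)$ avoids the edge, the subregion connects to the previous strip, and by the proposition it already carries the correct label and may be skipped. Conversely, for a pair in $P$ the whole open interval $(y_{t-1},y_t)$ lies in $(\un{y_c},\ov{y_c})$, its entire left boundary sits on the edge, so it is severed from the previous strip and is a genuinely new region; Lemma~\ref{lem:RNN} corroborates this, since such a pair gains $o_c$ upon insertion and loses it upon removal, so its RNN set differs from that of every leftward neighbor. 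Thus $P$ is precisely the set of first-encounter pairs, which proves the claim.

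The step I expect to be the main obstacle is the bookkeeping for the pairs that \emph{straddle} $\un{y_c}$ or $\ov{y_c}$ — the new or merged entries of $\mathcal{I}(l')$ (e.g.\ a split pair $\langle y_a,\un{y_c}\rangle$, or a pair that coalesces across a removed $\un{y_c}$) that are not literal copies of entries of $\mathcal{I}(l)$, so one cannot argue ``unchanged RNN set'' by syntactically matching line-status elements. The clean way around this is exactly the connectivity formulation above: a subregion is a single connected rectangle, so it connects to the previous strip as soon as \emph{some} sub-segment of its left boundary is uncovered, regardless of how the sorted list was edited; invoking the proposition then converts this connectivity directly into equality of labels, bypassing the edit-case analysis entirely. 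A minor residual point is ties among equal $y$-coordinates, which I would dispatch by noting that coincident sides bound only degenerate (zero-area) subregions and are absorbed by the arbitrary tie-breaking convention already fixed for the line status.
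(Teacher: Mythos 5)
Your argument is in substance the paper's own: the paper omits formal proofs of the lemmas in this section, but the text introducing Lemma~\ref{lem:insertNdelete} sketches exactly this reasoning --- by Lemma~\ref{lem:RNN}, the RNN set of a pair changes if and only if the pair is entirely enclosed by the NN-circle that is inserted into or removed from the line, and when the RNN set does not change the two subregions formed by the pair are connected (no side separates them) and hence represent the same, already labeled region. Your connectivity-first formulation --- observing that under the single-circle hypothesis the only edge on the line $x=x_l$ is the vertical side $\{x_l\}\times[\un{y_c},\ov{y_c}]$, so a subregion connects leftward iff its left boundary is not wholly covered by it --- is this same argument made rigorous, and it correctly handles the straddling/edited pairs that a purely syntactic comparison of $\mathcal{I}(l)$ and $\mathcal{I}(l')$ would miss.

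One caveat: your framing via ``every region is labeled exactly once (at its leftmost appearance)'' and the conclusion that $P$ ``is precisely the set of first-encounter pairs'' overstates both what is true and what is needed. The paper itself notes (Fig.~\ref{fig:repeatlabelling}) that multi-labeling persists within a single line status: several pairs of $P$ at one event can bound the same region, and in the removal case a pair of $P$ can belong to a region that wraps around $\mathcal{C}(o_c)$ (above $\ov{y_c}$ or below $\un{y_c}$) and was already labeled in an earlier strip, so such a pair is not ``a genuinely new region.'' Since the lemma asserts only that processing the pairs in $P$ \emph{suffices}, the load-bearing direction is your treatment of pairs outside $P$ (uncovered left boundary $\Rightarrow$ connected to the previous strip $\Rightarrow$ already correctly labeled, by the proposition plus the induction over events), and that direction is sound; you should simply drop the exactness claim rather than rely on it.
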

By Lemma~\ref{lem:insertNdelete}, the pairs that need to be processed are located within a \emph{range}. We call such a range a \emph{changed interval} and denote it by $[y_{c_i}, y_{c_j}]$. Note that $y_{c_i}$ and $y_{c_j}$ are coordinate values, not line elements.
When the line triggers (i.e., crosses) an event, multiple NN-circles are inserted into or removed from the line, and hence several (initial) changed intervals are created.
We cannot process such changed intervals one by one, since they may intersect and affect each other. We need to merge the intersected changed intervals.
When merging two changed intervals, we need to be careful about the line elements that are of the same value so that no regions are labeled repeatedly.
Specifically, any two changed intervals $[y_{c_i},y_{c_j}]$ and $[y_{c_{i'}},y_{c_{j'}}]$ with $y_{c_i} \leq y_{c_{i'}}$ are merged into a new one $[y_{c_i},\max\{y_{c_j},y_{c_{j'}}\}]$ if $y_{c_j} \geq y_{c_{i'}}$.
After merging, we only need to handle separated changed intervals, which can be processed individually.
\begin{figure}
\centering
\begin{overpic}[scale=0.9]{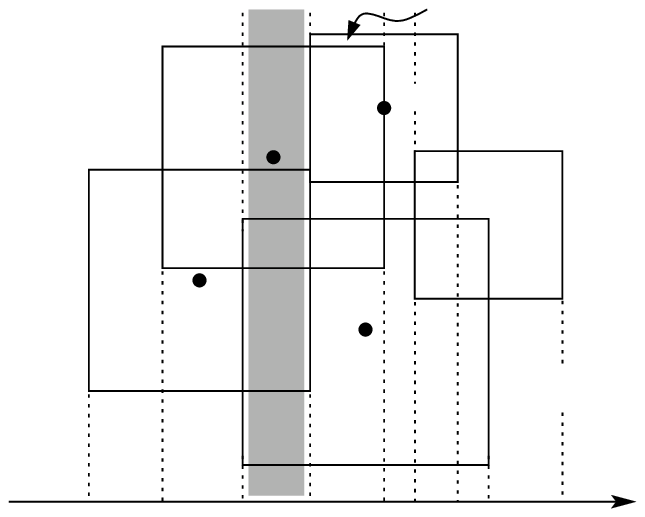}
\put(15,4){$x_1$} \put(26,4){$x_2$} \put(38,4){$x_3$} \put(47,4){$x_4$} \put(57,4){$x_5$} \put(62.5,4){$x_6$}
\put(68.5,4){$x_7$} \put(74,4){$x_8$} \put(84,4){$x_9$}
\put(43,16.5){$1$} \put(43,29.5){$2$} \put(43,42.5){$3$}
\put(60.5,49){$8$}
\put(53,43){$4$} \put(53,49){$5$} \put(53,61){$6$} \put(66,76){$7$}
\put(29,35.5){$o_1$} \put(42,60){$o_2$} \put(4.5,37){$\mathcal{C}(o_1)$} \put(15,62){$\mathcal{C}(o_2)$}
\put(62,64.5){$o_4$} \put(53,28){$o_3$} \put(71.5,64){$\mathcal{C}(o_4)$} \put(76,23){$\mathcal{C}(o_3)$}
\end{overpic}
\shrink
\caption{Example of changed intervals}
\label{fig:interval}
\shrink
\vspace{-5pt}
\end{figure}
For example, in Fig.~\ref{fig:interval}, when the line crosses $x_3$, the grey area is processed, in which $\mathcal{C}(o_3)$ is inserted into the line. The pairs (i.e., subregions), for convenience denoted by $1$, $2$, and $3$, need to be processed, which are in changed interval $[\un{y_3}, \ov{y_3}]$. When the line crosses $x_4$, $\mathcal{C}(o_1)$ is removed and $\mathcal{C}(o_4)$ is inserted. Only pairs $4$, $5$, $6$ and $7$ in $[\un{y_1}, \ov{y_4}]$ (merged from $[\un{y_1}, \ov{y_1}]$ and $[\un{y_4}, \ov{y_4}]$) are processed. When the line crosses $x_5$, $\mathcal{C}(o_2)$ is removed and only pair $8$, the only one in $[\un{y_2}, \ov{y_2}]$, is processed.
The validity of the sweep line strategy still holds with the above approach, since a simple induction will show that all pairs in all line statuses are properly processed.

\subsubsection{Caching and Retrieving Base Sets}
To obtain the RNN set within a changed interval, an efficient way is to use the RNN set of the pair that is immediately preceding of the changed interval as the base set.
Such a pair must be a valid pair, since the changed interval includes the line elements whose values are equal to the interval boundary.
Therefore, we cache (only) the RNN set of each valid pair in the line status.
We index the RNN set of a pair with its first element.
Specifically, if the pair's first element is the lower (resp. upper) side of $\mathcal{C}(o_i)$, we assign the RNN set a \emph{key} $2i-1$ (resp. $2i$).
When the RNN set of a pair is changed, the record in the index is also updated accordingly (for elements of the same value, the record is always maintained only at the last one for efficient access and space saving).
In this way, the base set for a changed interval is the record of the element that is one position ahead of the changed interval. (In case that the change interval is at the end of the line status, we also keep an empty set for the last element of a line status.)
When we process several separated changed intervals \emph{in ascending order}, it is guaranteed that such a record is always available and up-to-date.
Specifically, let $y_t$ be the element whose record we need.
If no such element $y_t$ exists, the base set is an empty set, since the changed interval must be at the beginning of the line status.
If $y_t$ is the boundary of a preceding changed interval, then such a record is already updated and ready to use, otherwise $y_t$ must exist in the last line status and the record is also available.

We use an example to illustrate the above approach.
In Fig.~\ref{fig:sweep}, $\mathcal{I}(1)$ is empty, $\mathcal{I}(2) = \| \un{y_1}, \ov{y_1} \|$, and $[\un{y_1}, \ov{y_1}]$ is the changed interval which is at the beginning of $\mathcal{I}(2)$.
We thus use an empty set as the base set, and keep the records $(2 \times 1 - 1,\set{o_1})$ for $\tuple{\un{y_1}, \ov{y_1}}$, and $(2 \times 1,\varnothing)$ for $\ov{y_1}$ (the last element), respectively.
In $\mathcal{I}(3)$, $\mathcal{C}(o_2)$ is inserted and the changed interval is $[\un{y_2}, \ov{y_2}]$. The element immediately preceding the changed interval is $\un{y_1}$.
We obtain $\set{o_1}$ as the base set with key $2 \times 1 - 1 = 1$,
and keep records $(2 \times 2 - 1 = 3, \set{o_1,o_2})$, $(2 \times 1 = 2, \set{o_2})$ and $(2 \times 2 = 4, \varnothing)$ for $\tuple{\un{y_2}, \ov{y_1}}$, $\tuple{\ov{y_1}, \ov{y_2}}$ and $\ov{y_2}$, respectively. We now have $(1,\set{o_1})$, $(2,\set{o_2})$, $(3,\set{o_1,o_2})$ and $(4,\varnothing)$ cached for future use.

\subsection{The Algorithm}

We now present the detailed steps of CREST, as summarized in Algorithm~\ref{alg:sweep}. We first obtain the event queue $\mathcal{Q}_x$ by storing the vertical sides of the NN-circles in $\mathcal{Q}_x$ in ascending order (line $4$).
The sides are stored in a way such that for each side, we can directly obtain the NN-circle to which it belongs and whether it is the left or right side.
\vspace{-5pt}
\begin{algorithm}[htb]
\DontPrintSemicolon
\small
\KwIn{An arrangement of $n$ NN-circles}
\KwOut{A subdivision with each region labeled}
$\mathcal{T} \leftarrow \varnothing$ \hfill $\diamond$ the index structure for the horizontal sides\\
$\mathcal{U} \leftarrow \varnothing$ \hfill $\diamond$ the changed NN-circles between events\\
$\mathcal{P} \leftarrow \varnothing$ \hfill $\diamond$ the cached RNN sets\\
$\mathcal{Q}_x \leftarrow$ the vertical sides of the NN-circles in ascending order\;
\For{each element $v$ in $\mathcal{Q}_x$}{
	$\mathcal{C}(o_i) \leftarrow$ the NN-circle to which $v$ belongs\;
	Add $\mathcal{C}(o_i)$ into $\mathcal{U}$\;
	\If{$v$ is a left side}{
		Insert $\un{y_i}$, $\ov{y_i}$ into structure $\mathcal{T}$\;
	}
	\Else{
		Delete $\un{y_i}$, $\ov{y_i}$ from structure $\mathcal{T}$\;
		Remove the corresponding records from $\mathcal{P}$\;
	}
	\If{the next element $v'$ of $\mathcal{Q}_x$ equals $v$}{
		Continue the outer for-loop\;
	}
	Merge the changed intervals of the NN-circles in $\mathcal{U}$\;
	Delete all elements in $\mathcal{U}$\;
	\For{\emph{each separated changed interval}}{
		 Find the starting element $st$ and ending element $ed$\; 
		 $\mathcal{R} \leftarrow$ Retrieve the base set from $\mathcal{P}$\;
		 \For{each element $y$ between $st$ and $ed$}{
		 	$\mathcal{C}(o_j) \leftarrow$ the NN-circle to which $y$ belongs\;
		 	\If{$y$ is the lower side}{
		 		$\mathcal{R} \leftarrow$ add $o_j$ into the base set\;
		 		$p \leftarrow 2j-1$
		 	}
		 	\Else{
		 		$\mathcal{R} \leftarrow$ remove $o_j$ from the base set\;
		 		$p \leftarrow 2j$
		 	}
		 	\If{$y$ is greater than the next element $y'$}{
				Label the region represented by pair $\tuple{y, y'}$ with set $\mathcal{R}$\;
				$\mathcal{P}[p] \leftarrow \mathcal{R}$\;
			}
		 }
	}
}
\caption{The CREST algorithm} \label{alg:sweep}
\end{algorithm}
\vspace{-5pt}
We then process the elements in $\mathcal{Q}_x$ one by one (line $5$). If an element is a left (resp. right) side, we insert (resp. remove) the two horizontal sizes of the NN-circle corresponding to the element into (resp. from) a balanced search tree $\mathcal{T}$ in which the data are stored in the doubly linked leaf nodes (e.g., a B$^+$-tree) (lines $6$-$14$).
When the next element in $\mathcal{Q}_x$ is greater than the current one, we process the event. The structure $\mathcal{T}$ now stores the information of the current line status.
We obtain separated changed intervals by merging the $y$-coordinates of the inserted and removed sides in this event (line $15$).
For each changed interval (line $17$), we locate the starting and ending elements in $\mathcal{T}$ (line $18$), and retrieve the base set from the RNN set records (line $19$), which are stored in a random access data structure such as an array.
We then sequentially check the elements in each changed interval (line $20$).
For each element, we either add or remove the corresponding data point (i.e., $o_i$) from the base set to obtain RNN sets of the valid pairs and label the regions (lines $21$-$29$).
To facilitate efficient insert, delete and copy operations on the base set, we keep the data points in a linked list and store pointers to the nodes in the linked list with an additional random access data structure indexed by the data points.
The RNN sets we obtained are also dynamically recorded to support the future base set retrieval (line $30$). After all changed intervals are processed, we eject the next element in $\mathcal{Q}_x$ and repeat the above steps until $\mathcal{Q}_x$ is empty. 

\section{Complexity Analysis} \label{sec:analysis}


\subsection{Complexity of CREST} \label{subsec:complexity}

We analyze the time complexity of CREST following the steps in Algorithm~\ref{alg:sweep}.
We sort the $2n$ vertical sides of the $n$ NN-circles in $O(n\log{n})$ time.
When we process the events, each horizontal side is inserted into and then deleted from the structure $\mathcal{T}$ once.
Therefore, there are at most $2n$ elements in $\mathcal{T}$, and the $2 \times 2n$ insertions and deletions can be done in $O(n\log{n})$ time.
To merge the changed intervals at an event, we can first sort them in lexicographical order and then obtain the merged result with a linear scan. This requires $O(\beta\log\beta + \beta) = O(\beta\log{\beta})$ time, where $\beta$ is the number of changed intervals at the event. Since each NN-circle can only be a changed interval twice, the total number of changed intervals in all events is $O(n)$. Thus, the overall time required for merging the changed intervals is bounded by $O(\sum\beta\log{\beta}) = O(\log{n}\sum\beta) = O(n\log{n})$.
For each merged changed interval, we obtain its starting element in $\mathcal{T}$ in $O(n\log{n} + \lambda)$ time, where $\lambda$ is the maximum size of the RNN sets in the arrangement.
This is because we first search in $\mathcal{T}$ in $O(\log{n})$ time to obtain an element $y_i$ whose value is equal to the lower endpoint of the interval. Starting from $y_i$, we obtain the starting element by checking backward (to the beginning of $\mathcal{T}$) until the elements are less than $y_i$. This procedure takes $O(2\lambda) = O(\lambda)$ time, since $\lambda$ is the maximum size of the RNN sets and there are at most $\lambda$ upper sides and $\lambda$ lower sides that are of the same $y$-coordinate.
Symmetric analysis applies to obtaining the ending element. We have only $O(n)$ changed intervals, and thus obtaining starting and ending elements can be done in $O(n\log{n} + n\lambda)$ time.
We then process the elements between them. We first retrieve a base set, which takes at most $O(\lambda)$ copying time. Thus, it takes $O(n\lambda)$ time to obtain base sets for $O(n)$ changed intervals.
For each element between the starting and ending elements, we either add into or remove from the base set its corresponding data point (i.e., $o_i$). It takes at most $O(\lambda)$ time for the adding or removing operations to obtain an RNN set for a valid pair. This is because to get an RNN set of size $\alpha_t$ by changing an RNN set of size $\alpha_s$, at most $\alpha_s$ data points are removed and $\alpha_t$ data points are added, which takes $O(\alpha_s + \alpha_t) = O(\lambda)$ time. 
We denote by $k$ the number of valid pairs, and hence the time for obtaining the RNN sets for $k$ valid pairs is bounded by $O(k\lambda)$.
For each valid pair, we record its RNN set and label its corresponding region, and this takes $O(k\lambda)$ time.

Putting all things together, we have that CREST stops in $O(n\log{n} + n\lambda + k\lambda)$ time.
Since $k$ denotes the number of times of region labeling in CREST, $k$ must be greater than or equal to the number of regions in the arrangement which is in turn greater than or equal to the number of NN-circles $n$.
Therefore, the time complexity of CREST is $O(n\log{n} + k\lambda)$. Shortly (in Section~\ref{subsec:bound_k}) we will prove that $k = \Theta(r)$, where $r$ is the number of regions in the arrangement.

The space required by the queue $\mathcal{Q}_x$ and structure $\mathcal{T}$ is $O(n)$. The space required by caching the RNN sets is $O(n\lambda)$, and the storage of the base set requires $O(n + \lambda)$ space. Overall, the space complexity of CREST is $O(n\lambda)$.

\subsection{Bounding the Times of Region Labeling in CREST}
\label{subsec:bound_k}

From the above analysis, we can see that the number of times of region labeling $k$ largely decides the performance of CREST.
In CREST, we successfully avoid labeling the same region multiple times in different line statuses. Although rarely happens, multi-labeling still exists within the same line status. For example, in Fig.~\ref{fig:repeatlabelling}, at the event of the left side of $\mathcal{C}{(o_1)}$, the gray region is labeled six times.
\begin{figure}[!t]
\begin{minipage}[b]{0.49\linewidth}
\centering
\begin{overpic}[scale=.5]{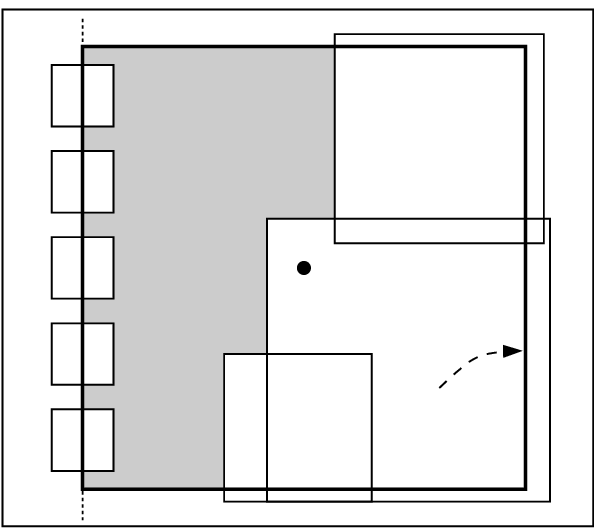}
{\small
\put(50,38){$o_1$} \put(65,15){$\mathcal{C}{(o_1)}$}
}
\end{overpic}
\vspace{-8pt}
\caption{Multilabelling}
\label{fig:repeatlabelling}
\end{minipage}
\begin{minipage}[b]{0.49\linewidth} 
\centering
\begin{overpic}[scale=0.5]{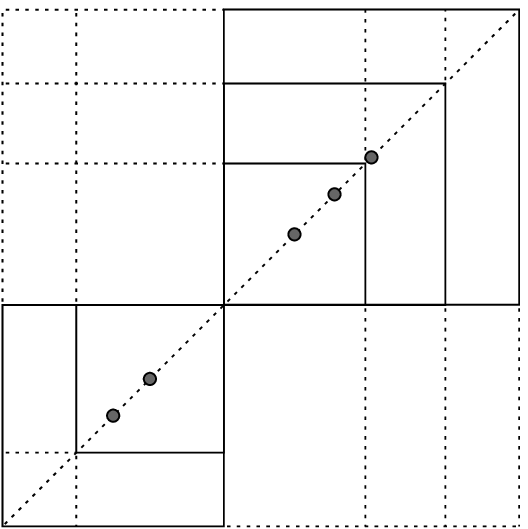}
{\small
\put(-3,105){$a_5$} \put(13,105){$a_3$} \put(40,105){$a_1$} \put(66,105){$a_4$} \put(81,105){$a_2$} \put(95,105){$a_6$}
\put(-15,-1){$a_5$} \put(-15,13){$a_3$} \put(-15,40){$a_1$} \put(-15,68){$a_4$} \put(-15,83){$a_2$} \put(-15,97){$a_6$}
\put(22,17){$o_5$} \put(31,26){$o_3$} \put(45,55){$o_4$} \put(52,63){$o_2$} \put(59,73){$o_6$}
}
\end{overpic}
\vspace{-8pt}
\caption{Reduction}
\label{fig:reduction}
\end{minipage}
\vspace{-1.5em}
\end{figure}
Despite the multi-labeling, we show with the following lemma that the number of times of region labeling in CREST and the number of regions in the arrangement, up to a constant factor, are asymptotically the same (as a function of $n$).
\begin{lemma}
$k = \Theta(r)$, where $r$ is the number of regions in the arrangement.
\end{lemma}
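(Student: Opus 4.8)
The plan is to bound $k$ from both sides, with the lower bound being immediate and the matching upper bound requiring an injective charging argument. For the lower bound $k = \Omega(r)$, I would argue directly from correctness: since CREST solves the RC problem, each of the $r$ regions of the arrangement must receive at least one label, and $k$ counts exactly the labeling operations performed; hence $k \ge r$.

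For the upper bound $k = O(r)$, the strategy is to charge every labeling to a distinct geometric feature of the arrangement and then bound the number of such features. First I would recall from the Euler-characteristic discussion in Section~\ref{sec:enlighten} that, because the NN-circles are axis-aligned squares, every vertex of the induced planar subdivision has degree at most four; consequently the numbers of vertices, edges, and faces are all within constant factors of one another and of $r$. In particular, the total number of \emph{maximal vertical edge segments} — the pieces into which the $2n$ vertical NN-circle sides are split by the horizontal sides that cross them — is $O(r)$, since each crossing is an intersection vertex and the count of segments is at most the number of such vertices plus $2n$.

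The heart of the proof is an injective map from labelings to these vertical segments. When CREST labels a subregion $r_l^t = [x_{l-1}, x_l, y_{t-1}, y_t]$, it does so while processing a changed interval created at event $x_{l-1}$ by some NN-circle $\mathcal{C}(o_c)$ that is newly or no longer cut by the line. By Lemma~\ref{lem:insertNdelete} the processed pair satisfies $\un{y_c} \le y_{t-1} < y_t \le \ov{y_c}$, so the left boundary $\{x_{l-1}\} \times [y_{t-1}, y_t]$ of the subregion lies on a genuine vertical side of $\mathcal{C}(o_c)$ — its left side $\un{x_c} = x_{l-1}$ if $o_c$ is inserted, or its right side $\ov{x_c} = x_{l-1}$ if it is removed. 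I would charge the labeling to the portion of this side spanning $[y_{t-1}, y_t]$. Labelings made at different events have different $x_{l-1}$ and hence different segments; labelings made at the same event come from distinct pairs of the same line status and hence from $y$-disjoint segments. The charging is therefore injective, so $k$ is at most the number of vertical edge segments, which is $O(r)$; combined with the lower bound this gives $k = \Theta(r)$.

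The main obstacle is guaranteeing injectivity in the presence of the \emph{multi-labeling} that can occur within a single line status (Fig.~\ref{fig:repeatlabelling}), where the same geometric region appears to be touched several times at one event. I would argue that such repeated touches arise only from degenerate configurations in which several NN-circle sides share a coordinate, and that they still correspond to distinct pairs and hence to distinct (possibly degenerate-length-separated) vertical segments; care must be taken to count coincident horizontal sides as the high-degree vertices they induce, so that the $O(r)$ bound on the number of segments is preserved. A secondary check is to confirm that the merging and separated-interval processing described after Lemma~\ref{lem:insertNdelete} never causes a single pair to be labeled twice at the same event, which ensures each subregion instance maps to exactly one segment. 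With these degenerate cases handled, the injective charging closes the upper bound and the lemma follows.
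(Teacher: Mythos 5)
Your proof is correct, and it reaches the bound by a somewhat different route than the paper, though both rest on the same planar-graph bookkeeping. The paper's upper bound is a one-line combinatorial charge: each labeling requires a distinct valid pair, which uses at least one edge bounding the labeled region, so $k$ is at most the sum of region degrees, i.e., $k \leq 2e$; it then uses the degree census of a square arrangement ($v = v_2 + v_3 + v_4$, $2e = 4v_4 + 3v_3 + 2v_2$) with Euler's formula $v - e + r - c = 1$ to get $r = v_4 + v_3/2 + c + 1$, whence $k \leq 2e \leq 6r + 2v_2 \leq 6r + 8n \leq 14r$ (using $v_2 \leq 4n$ and $n \leq r$), and $r \leq k$ is immediate as in your proposal. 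You instead build a \emph{global injection} from labelings into vertical arrangement edges via the sweep mechanics of Lemma~\ref{lem:insertNdelete}; this is sound, gives the marginally sharper $k \leq$ (number of vertical edges) $\leq e$, but still needs the identical Euler-characteristic argument to bound that edge count by $O(r)$ — so your extra machinery buys injectivity that the paper never needs, at the cost of case analysis tied to the algorithm's event processing. Two execution caveats, neither fatal: first, after merging, a processed pair can straddle the $y$-spans of two different changed circles (e.g., removed circles with intervals $[0,5]$ and $[3,10]$, both with right sides at $x_{l-1}$, admit a pair like $[2,6]$), so "the portion of this side" is not well defined; you should charge instead to any whole vertical edge inside $\{x_{l-1}\}\times[y_{t-1},y_t]$, which exists because the merged interval is covered by the union of the changed circles' sides at $x_{l-1}$ and both endpoints of a valid pair are arrangement vertices. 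Second, your claim that multi-labeling within one line status "arises only from degenerate configurations" is inaccurate — the six-fold labeling in Fig.~\ref{fig:repeatlabelling} occurs in general position, one region simply meeting the new side along several separated pairs — but this does not damage your injection, since distinct pairs at one event have disjoint interiors and hence charge disjoint segments. The paper's proof is shorter precisely because charging to region degrees (counted with multiplicity $2e$) sidesteps all of these injectivity concerns.
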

\begin{proof}
Let $v$, $e$, and $c$ be the number of vertices, edges and connected components in the arrangement, respectively.
We call the number of edges bounding a region the degree of the region, and the number of edges incident to a vertex the degree of the vertex.
In an arrangement of squares, there are only $2$-, $3$-, and $4$-degree vertices, which are denoted by $v_2$, $v_3$, and $v_4$, respectively.
In CREST, the number of times a region is labeled cannot be greater than the degree of the region, since each time the region is labeled we need a distinct valid pair which requires at least one of the edges bounding the region.
Therefore, $k$ is less than or equal to the sum of degrees of all regions which equals $2e$, i.e., $k \leq 2e$.
%
In the arrangement, we also have $v=v_2+v_3+v_4$, $2e = 4v_4+3v_3+2v_2$ and $v - e + r -c = 1$ (Euler characteristic). Combining these three equations, we obtain
$r = v_4 + v_3/2 + c + 1.$
We then have that
$$k \leq 2e = 4v_4+3v_3+2v_2 \leq 6(v_4 + v_3/2 + c + 1) + 2v_2 = 6r + 2v_2.$$
The number of 2-degree vertices is less than or equal to $4n$ and hence less than or equal to $4r$, since each square makes at most four 2-degree vertices and $n \leq r$. Therefore, it follows that
$$k \leq 6r + 2v_2 \leq 6r + 8n \leq 14r.$$
Obviously, $r \leq k$, and hence $r \leq k \leq 14r$, which completes the proof.
\end{proof}
We conclude the above analysis with the following theorem.
\begin{theorem} \label{theo:complexity}
The CREST algorithm solves the (bichromatic) RC problem in $O(n\log{n} + r\lambda)$ time with $O(n\,\lambda)$ space, where $r$ and $\lambda$ are the number of regions and the maximum size of the RNN sets in the arrangement, respectively.
\end{theorem}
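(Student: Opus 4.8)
The plan is to treat this theorem purely as a synthesis of two facts already in hand: the per-step running-time accounting for Algorithm~\ref{alg:sweep} carried out in Section~\ref{subsec:complexity}, and the combinatorial identity $k = \Theta(r)$ just established in the preceding lemma. First I would dispatch correctness, then bound the time, then the space, and finally substitute the lemma to replace $k$ by $r$.

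For correctness I would argue by induction on the events ejected from $\mathcal{Q}_x$, with the base case being the empty line status before the first event. For the inductive step, Lemma~\ref{lem:insertNdelete} pins down exactly which pairs can have their RNN sets altered when an NN-circle enters or leaves the sweep line, namely those inside the changed intervals, while Lemma~\ref{lem:RNN} and Corollary~\ref{coro:RNN} guarantee that scanning each changed interval from its cached base set produces the correct RNN set for every valid pair. Pairs lying outside the merged changed intervals represent regions already labeled with an unchanged RNN set, so every region of the arrangement receives its correct label at least once.

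For the time bound I would simply accumulate the costs tallied in Section~\ref{subsec:complexity}: $O(n\log n)$ to sort the $2n$ vertical sides, $O(n\log n)$ for the $O(n)$ insertions and deletions in $\mathcal{T}$, $O(n\log n)$ to merge all changed intervals, $O(n\log n + n\lambda)$ to locate the starting and ending elements of the $O(n)$ intervals, and $O(k\lambda)$ to retrieve base sets, update RNN sets, and label regions for the $k$ valid pairs. Summing gives $O(n\log n + n\lambda + k\lambda)$. The one bookkeeping point is that the $n\lambda$ term is absorbed: since $k$ counts region labelings it is at least the number of regions $r$, which is in turn at least $n$, so $n\lambda \le k\lambda$ and the total collapses to $O(n\log n + k\lambda)$. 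Applying the lemma $k = \Theta(r)$ then yields the claimed $O(n\log n + r\lambda)$.

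The space bound is the most routine step: $\mathcal{Q}_x$ and $\mathcal{T}$ use $O(n)$, the base set uses $O(n+\lambda)$, and the cache $\mathcal{P}$ holding one RNN set per valid pair of the current line status dominates at $O(n\lambda)$, giving $O(n\lambda)$ overall. I do not expect a genuine obstacle here, because the hard content---bounding the number of labelings within a constant factor of the number of regions---has already been discharged by the $k = \Theta(r)$ lemma through the Euler-characteristic count; the remaining work is only the careful aggregation of the per-step costs and the observation that $n\lambda$ is dominated by $k\lambda$ via $n \le r \le k$.
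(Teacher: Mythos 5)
Your proposal is correct and follows essentially the same route as the paper: the paper likewise proves Theorem~\ref{theo:complexity} by summing the per-step costs $O(n\log n + n\lambda + k\lambda)$ from Section~\ref{subsec:complexity}, absorbing $n\lambda$ via $n \leq r \leq k$, and substituting $k = \Theta(r)$ from the preceding lemma, with the same $O(n\lambda)$ space accounting dominated by the cached RNN sets. Your added induction-on-events correctness sketch matches the paper's one-line remark that ``a simple induction will show that all pairs in all line statuses are properly processed,'' so nothing diverges.
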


\subsection{A Lower Bound of the RC Problem} \label{subsec:lower_bound}

We show that $\Omega(n\log{n} + r\lambda^*)$ is a lower bound of the RC problem (in the algebraic computation tree model)~\cite{BenOr1983}, where $\lambda^*$ is the \emph{average} size of RNN sets in the arrangement.
When $r\lambda^*$ is the dominating term, at least the RNN sets of all regions need to be output, the above bound is a trivial lower bound. Thus, we only need to show that it requires $\Omega(n\log{n})$ operations even without considering the output cost.
This bound is proved by the reduction from the \emph{point distinctness problem} to a special case of the RC problem.
\begin{definition}[Element Distinctness]
Given real numbers $a_1, \ldots, a_n \in \mathbb{R}$, determine whether or not there is a pair $i$, $j$ with $i \neq j$ and $a_i = a_j$.
\end{definition}
We show that the element distinctness problem can be reduced to the RC problem in linear time. For each real number $a_i$, we create a point $(a_i, a_i)$, $i=1,2,\ldots,n$ in the plane. We then build a square $\mathcal{C}(o_i)$ with point $(a_i, a_i)$, $i=2,\ldots,n$ and point $(a_1,a_1)$ being the diagonally opposite corners and $o_i$ being the center. An example of such reduction is shown in Fig.~\ref{fig:reduction}.
These squares form an arrangement of NN-circles in a two-dimensional space. We use this arrangement as input to any algorithm that solves the RC problem.
A correct algorithm outputs exactly $n$ RNN sets (including the empty set) if and only if the elements are distinct. The reason is that each RNN set corresponds to only one region, and there are $n$ regions (including the exterior face) in the arrangement if and only if the elements are distinct. It has been proved that the element distinctness problem has a lower bound $\Omega(n\log{n})$~\cite{BenOr1983} (in the algebraic computation tree model), which implies that RC has a lower bound $\Omega(n\log{n})$ without the output cost.
Therefore, $\Omega(n\log{n} + r\lambda^*)$ is a lower bound of the RC problem.

\subsection{Optimality of CREST} \label{subsec:optimality}

From the time complexity of CREST and lower bound of the RC problem, CREST is asymptotically optimal in terms of the number of times of region labeling (i.e., influence computation)
in all cases, since $k = \Theta(r)$.

In the following cases, we show that the upper bound $O(n\log{n} + r\lambda)$ of CREST is also tight, which indicates that CREST is overall asymptotically optimal.
For the bound to be tight, it is sufficient to show that $\lambda = \Theta(\lambda^*)$.

\textbf{Case (i)}. When the clients and facilities are relatively uniformly distributed such that $\lambda$ is bounded by a sufficiently large constant $C$ (which depends on $\frac{|\mathcal{O}|}{|\mathcal{F}|}$), since $\lambda^* \leq \lambda$, $\lambda^*$ is also bounded by $C$. Thus, $\lambda = \Theta(\lambda^*) = O(1)$. An example is that none of the $n$ squares intersects any other ones (or only a few of them overlap).

\textbf{Case (ii)}. When $\lambda$ is unbounded, we show with the worst case illustrated in Fig.~\ref{fig:worst_case} that $\lambda = \Theta(\lambda^*)$ also holds when every square intersects all the other ones.
In this arrangement, $\lambda = n$, and we have that $r = n^2 - n + 2$ and $r \cdot \lambda^* = \frac{n^3 + 2n}{3}$. Therefore, it follows that
$$\lambda^* = \frac{n^3 + 2n}{3(n^2 - n + 2)} = \frac{n}{3} \cdot \frac{n^3 + 2n}{n^3 - n^2 + 2n} \geq \frac{n}{3} = \frac{\lambda}{3}.$$
Since $\lambda^* \leq \lambda$, it follows that $\lambda = \Theta(\lambda^*)$, which indicates CREST is overall asymptotically optimal.

\section{RNNHM in Other Settings} \label{sec:extension}

We show how CREST solves the RNNHM problem with the monochromatic RNNs, $L_1$, and $L_2$ distance metrics.

\subsection{Monochromatic RNNs} \label{subsec:mono}

CREST directly applies to the monochromatic RNNs, since $\mathcal{O}$ and $\mathcal{F}$ being the same set does not affect the computation of the NN-circle and these NN-circles still form a planar subdivision with axis-aligned edges as in the bichromatic RNNs.
By Korn et al.~\cite{Korn2000}, an RNN set contains at most six points for monochromatic RNN queries, which means $\lambda = O(1)$.
Therefore, by Theorem~\ref{theo:complexity},
the time complexity of CREST for the monochromatic RNNs is $O(n\log{n} + r)$ and the space complexity is $O(n)$.

\subsection{RNNHM with L{\small 1} Distance} \label{subsec:L1}
In two-dimensional spaces, the $L_1$ distance can be viewed as equivalent to the $L_\infty$ distance by rotation and scaling. Specifically, with the $L_1$ distance, NN-circles are of diamond shape. If we rotate (around the origin) the coordinate system counter-clockwise by $\pi/4$, diamonds become squares. Each point $(x,y)$ in the original system has a corresponding point $(x',y')$ in the rotated system with $x' = x\cos\theta - y\sin\theta$, $y' = x\sin\theta + y\cos\theta$ and $\theta = \pi/4$.
In the rotated system, CREST directly applies.
The transformation takes $O(n)$ time and the overall time and space complexities stay unchanged.

\subsection{RNNHM with L{\small 2} Distance} \label{subsec:L2}


\begin{figure}
\centering
\begin{overpic}[scale=0.12]{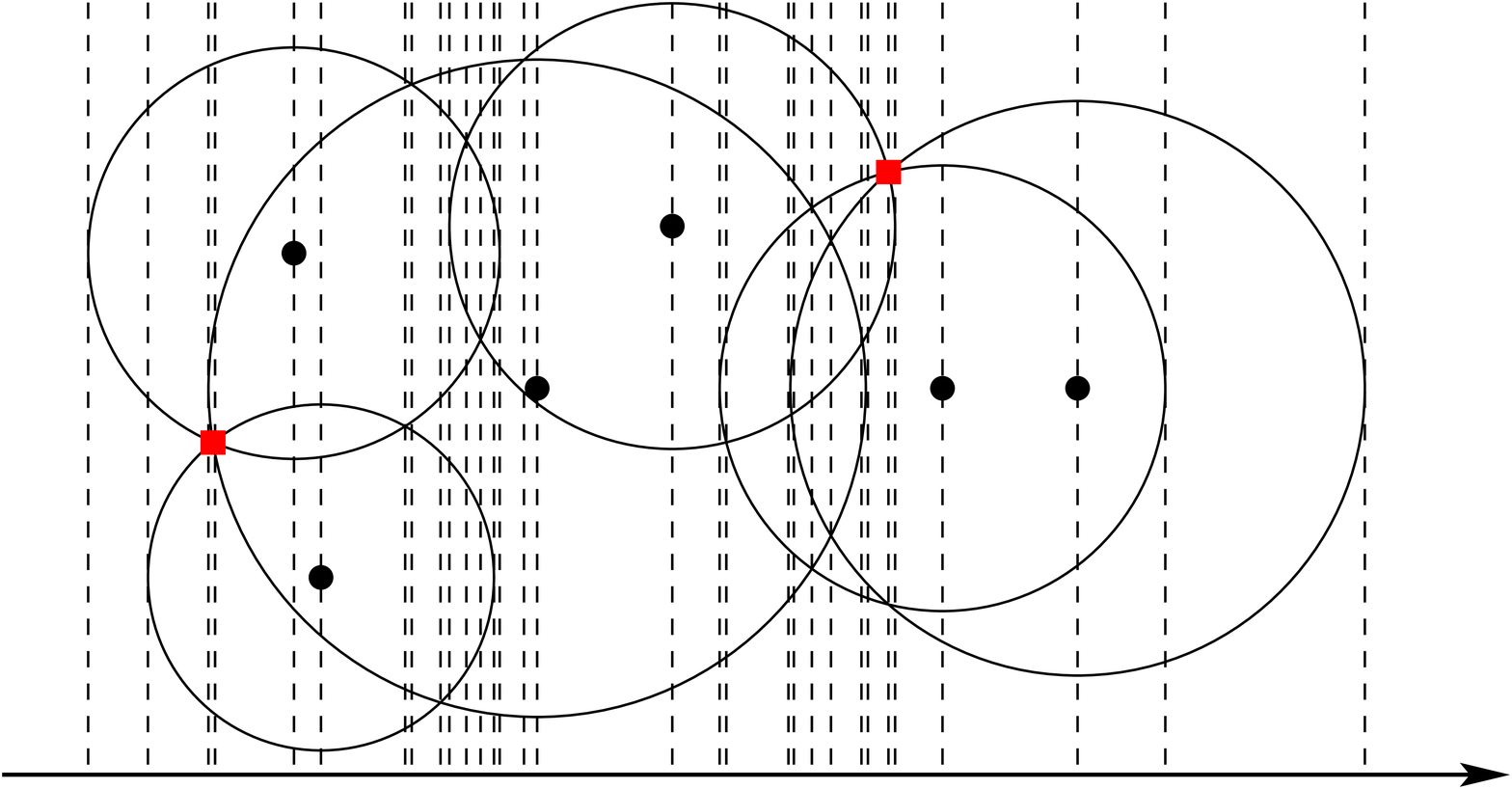}
\put(4,1){$x_1$} \put(8.5,1){$x_2$} \put(14,1){$x_4$} \put(18.5,1){$x_5$} \put(24,1){$x_7$} \put(34,1){$x_{16}$} \put(41,1){$x_{17}$} \put(49,1){$x_{20}$} \put(61,1){$x_{28}$} \put(70,1){$x_{29}$} \put(88,1){$x_{31}$}
\put(38,11){$\un{y_1}$} \put(38,23){$\un{y_2}$} \put(38,46){$\ov{y_1}$} \put(38,56){$\ov{y_2}$}
\put(37,30){$o_1$} \put(40,37){$o_2$}
\end{overpic}
\shrink
\caption{CREST with $L_2$ distance metric}
\label{fig:L2}
\vspace{-1.8em}
\end{figure}

With the $L_2$ distance, the NN-circles are of circular shape. They form a planar subdivision with \emph{curved} edges, as shown in Fig.~\ref{fig:L2}.
CREST still applies in such a subdivision but requires modifications as follows.
We use the $x$-extreme points of circles (instead of vertical sides of squares) as event points. In the line status, we use the arc segments of circles between two consecutive events as line elements (instead of horizontal sides of squares).
For each line element $y_i$ (i.e., an arc segment), we assign two values $y_i^s$ and $y_i^l$, which are the smallest and largest $y$-coordinates of $y_i$ between two consecutive events $e_{l-1}$ and $e_l$, respectively.
Line element $y_i$ is less than $y_j$ iff (i) $y_i^s < y_j^s$ or (ii) $y_i^s = y_j^s$ and $y_i^l < y_j^l$ or (iii) $y_i^s = y_j^s, y_i^l = y_j^l$ and $y_i^m < y_j^m$, where $y_i^m$ and $y_j^m$ are the y-coordinates of $y_i$ and $y_j$ at $\frac{x_{l-1} + x_l}{2}$, respectively.
We include intersection points as event points (e.g, $x_4$ and $x_7$ in Fig.~\ref{fig:L2}). This is because the arc segments of NN-circles switch positions at intersection points.
We also use the center of each NN-circle as event points (e.g., $x_5$ and $x_{29}$ in Fig.~\ref{fig:L2}) to guarantee that each line element is $y$-monotone (i.e., strictly increasing or decreasing in the $y$-dimension).
Before processing an event, we update values $y_i^s$ and $y_i^l$ for each line element $y_i$ regardless of whether it is related to the event.
This update is required in order to maintain a proper order in the line status because the arcs go up or down between events and their upper and lower values change even if they are irrelevant to the events.
Note that such an update does not change the relative order of the line elements irrelevant to the events, and thus can be completed in linear time.

Apart from the above modifications, CREST remains the same.
Specifically, if an event point at $e_{l-1}$ is the left boundary point of $\mathcal{C}(o_c)$, we insert $\un{y_c}$ and $\ov{y_c}$ into the line status with $\un{y_c}^l = \ov{y_c}^s = y_{o_c}$ and $\un{y_c}^s$ and $\ov{y_c}^l$ being the lower and upper $y$-coordinates where $e_l$ intersects $\mathcal{C}(o_c)$, respectively. We also create a change interval with $\un{y_c}$ and $\ov{y_c}$.
If an event point is an intersection point, we obtain the relevant line elements (arcs incident to the intersection point) in the line status, switch their positions and create a change interval with these elements.
If an event point is the right boundary point or center of an NN-circle, we remove or update the two line elements corresponding to the NN-circle. We do not create change intervals for either of these two types of event points, since no pair is between the removed elements and updating the line elements by the centers is only to keep them $y$-monotone.
We then merge and handle change intervals as before.

\textbf{Complexities}.
In the worst case (as shown in Fig.~\ref{fig:worst_case}), CREST runs in $O(n^3)$ time with the $L_2$ distance, since there can be as many as $O(n^2)$ events and for each event we need to update $O(n)$ line elements.
However, the worst case complexity is much lower than that of an existing algorithm~\cite{Sun2012cikm}, which suffers from an exponential running time in the worst case.
The algorithm was proposed to obtain regions with the maximum influence value, but it could be adapted to solve the RC problem if we remove its pruning techniques.
The algorithm~\cite{Sun2012cikm} follows the filter and refine paradigm by enumerating all possible regions and then checking their existence. For example, when $\mathcal{C}(o_1)$ intersects $\mathcal{C}(o_2)$ and $\mathcal{C}(o_3)$, it enumerates the regions $\hat{o_1}\hat{o_2}\hat{o_3}$, $\hat{o_1}\hat{o_2}\bar{o_3}$, $\hat{o_1}\bar{o_2}\hat{o_3}$, $\hat{o_1}\bar{o_2}\bar{o_3}$, where $\hat{o_i}$ means inside $\mathcal{C}(o_i)$ and $\bar{o_i}$ means outside $\mathcal{C}(o_i)$, and then checks whether such regions really exist.
In our experiments (in Section~\ref{sec:exp}), CREST constantly outruns the algorithm on data sets of various settings.

\section{Experiments} \label{sec:exp}

In this section, we experimentally evaluate the performance of CREST.
We use both real and synthetic data sets. Two real data sets, NYC and LA, contain points-of-interest in New York City and Los Angeles, respectively (we obtain the data sets from the authors of~\cite{Bao:2012:LPR}).
Table~\ref{tab:realdata} lists the details of the real data sets. We also generate two synthetic data sets, Uniform and Zipfian, which contain points of uniform and Zipfian distributions, respectively. The skew coefficient in Zipfian distribution is set to $0.2$.
\begin{table}[tb]
\shrink
\caption{Real Data Sets}\label{tab:realdata}
\shrink
\small
\centering
\begin{tabular}{c|c|c}
\toprule
\textbf{Name}	&	\textbf{Size}	&	\textbf{Description} \\
\midrule \midrule
NYC	&	128,547		&	points-of-interest in New York City		\\	\hline
LA	&	116,596		&	points-of-interest in Los Angeles		\\ 
\bottomrule
\end{tabular}
\shrink
\vspace{-5pt}
\end{table}
In the experiments, we use the bichromatic RNNs since the monochromatic type is just a special case of the bichromatic RNNs.
We use $L_1$ and $L_2$ distance metrics since they are used more often than $L_\infty$ in real-world scenarios, and $L_1$ and $L_\infty$ are equivalent in two-dimensional spaces.
We uniformly sample from the data sets to obtain the client set $\mathcal{O}$ and the facility set $\mathcal{F}$.
All algorithms are implemented using C++ and the experiments are conducted on a desktop computer with a $3.4$GHz Intel i$7$-$2600$ CPU and $8$GB main memory.

\subsection{Showcasing Real-World Heat Maps}

In the first set of experiments, we show the RNN heat maps for two cities: New York City and Los Angeles. For each data set NYC and LA, we uniformly sample $20,000$ points as the clients and $6,000$ points as the facilities, since in real world scenarios the number of clients is usually larger than the number of facilities. For simplicity, we measure the influence by the size of RNN sets, although any other function on the RNN sets may be used.
Fig.~\ref{fig:ny_hm} and Fig.~\ref{fig:ny_gm} (in Section~\ref{sec:intro}) show the RNN heat map and the satellite map of New York City (within latitude and longitude ranges $[40.50$, $40.95] \times [-74.15$, $-73.70]$), while Fig.~\ref{fig:LAhm} and Fig.~\ref{fig:LAmap} show the heat and satellite maps of Los Angeles (within $[33.82$, $34.17] \times [-118.47$, $-118.12]$), respectively.
Comparing the heat and satellite maps, we can see that they are closely geographically correlated as expected.
For instance, the mountain and sea area have few clients or facilities, and hence have very low heat.
We can easily explore regions of various influences to help various decision making applications such as those described in the motivating examples. If the decision maker is interested in any specific area, she can zoom in to see more details.
\begin{figure}[htb]
\centering
\subfigure[Heat map for LA]{
\epsfig{file=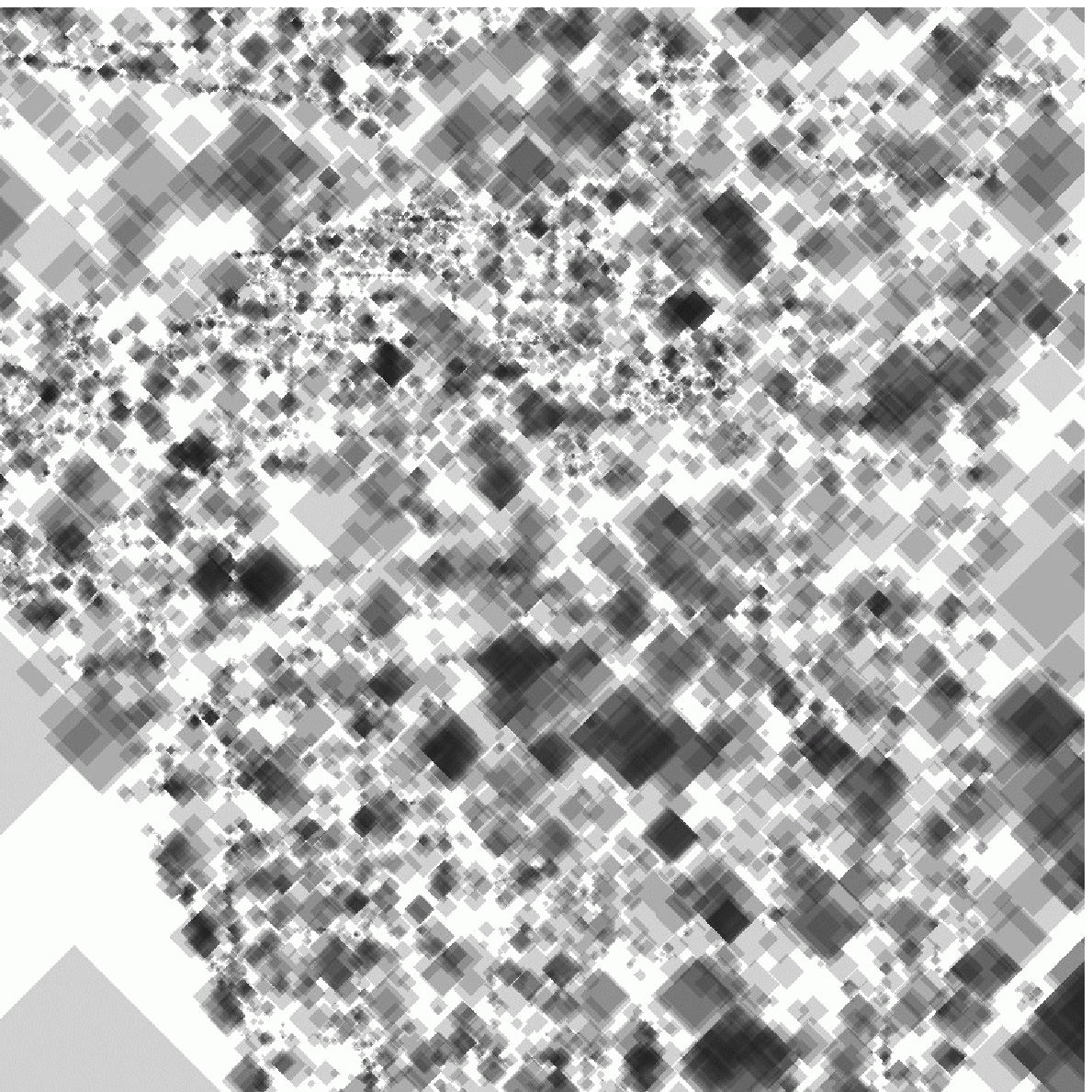, width=1.35in}
\label{fig:LAhm}
}
\subfigure[Satellite map for LA]{
\epsfig{file=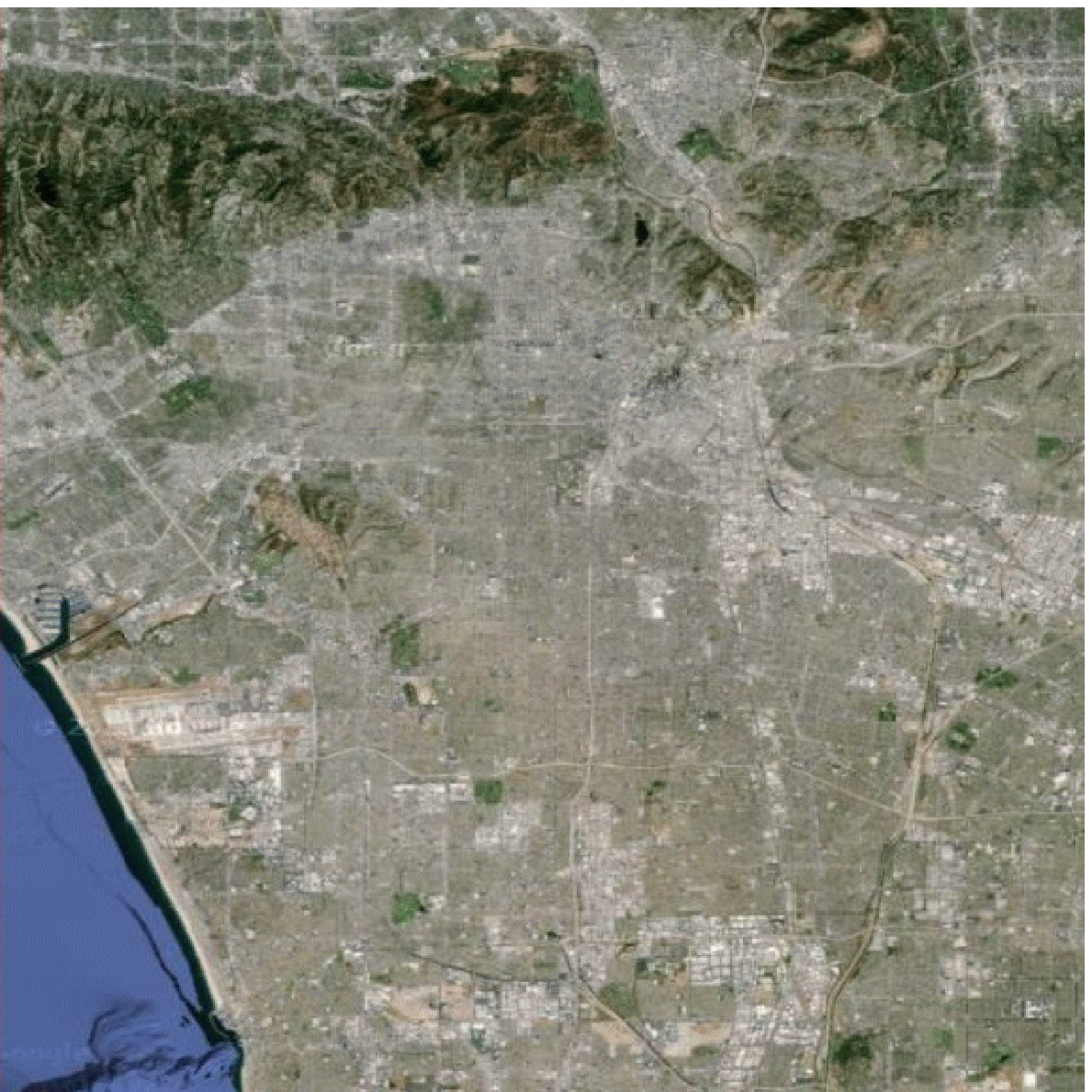, width=1.35in}
\label{fig:LAmap}
}
\vspace{-5pt}
\caption{Real-world heat map}\label{fig:hm}
\vspace{-1.5em}
\end{figure}

\subsection{Performance of CREST with L{\small 1} Distance}
In this set of experiments, we compare the running time of three algorithms: the baseline algorithm (\textbf{BA}), the CREST algorithm with only the RNN computation optimization, denoted by \textbf{CREST-A}, and the \textbf{CREST} algorithm with both RNN computation optimization and repetitive region labeling optimization. We cannot evaluate the effect of the latter optimization alone, since it is built upon the former optimization.
We compute the influence by (i) the size of RNN sets and (ii) the function considering the capacity constraints of facilities~\cite{Sun2012cikm} (described in the Introduction), respectively.
The results of the latter function are consistent with those using the size and hence are omitted due to space limitation.

\textbf{Effect of $\frac{|\mathcal{O}|}{|\mathcal{F}|}$}.
We first vary the ratio of $\frac{|\mathcal{O}|}{|\mathcal{F}|}$ from $2^1$ to $2^{10}$. 
Since the baseline algorithm does not terminate within $24$ hours on large data sets, we only show results on relatively small data sets
and fix $n = |\mathcal{O}|$ at $2^{10}$.
We plot the results in Fig.~\ref{fig:cf} (note the log scale in the axes).
We can see that in all data sets, CREST outperforms the baseline by at least three orders of magnitude, and outperforms CREST-A by several times.
With the increase of $\frac{|\mathcal{O}|}{|\mathcal{F}|}$, the running time of CREST also moderately increases. This is because both the number of regions and the maximum size of the RNN sets increase with $\frac{|\mathcal{O}|}{|\mathcal{F}|}$. The growth rates of CREST-A and CREST are similar, which indicates that the ratio of $\frac{|\mathcal{O}|}{|\mathcal{F}|}$ mainly affects the number of regions in the arrangement, but not the number of times a same region is repeatedly labeled.
We can also see from the slope of the lines that with the increase of $\frac{|\mathcal{O}|}{|\mathcal{F}|}$, the number of regions increases only polynomially (rather than exponentially). This indicates that the performance of CREST will stay stable even if $\frac{|\mathcal{O}|}{|\mathcal{F}|}$ becomes very large.

\begin{figure}[t]
\centering
\subfigure[LA]{
\epsfig{file=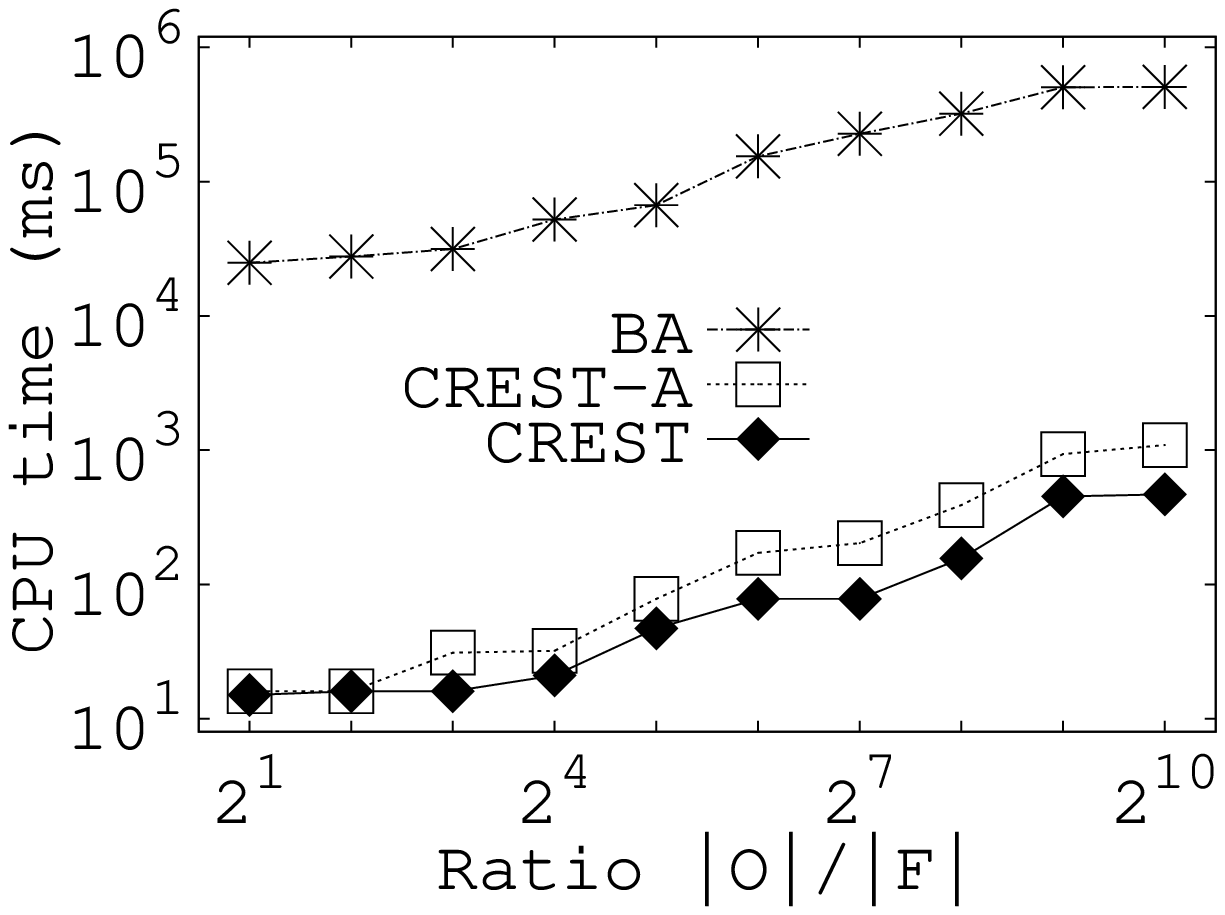, width=1.55in}
\label{fig:cfLA}
}
\subfigure[NYC]{
\epsfig{file=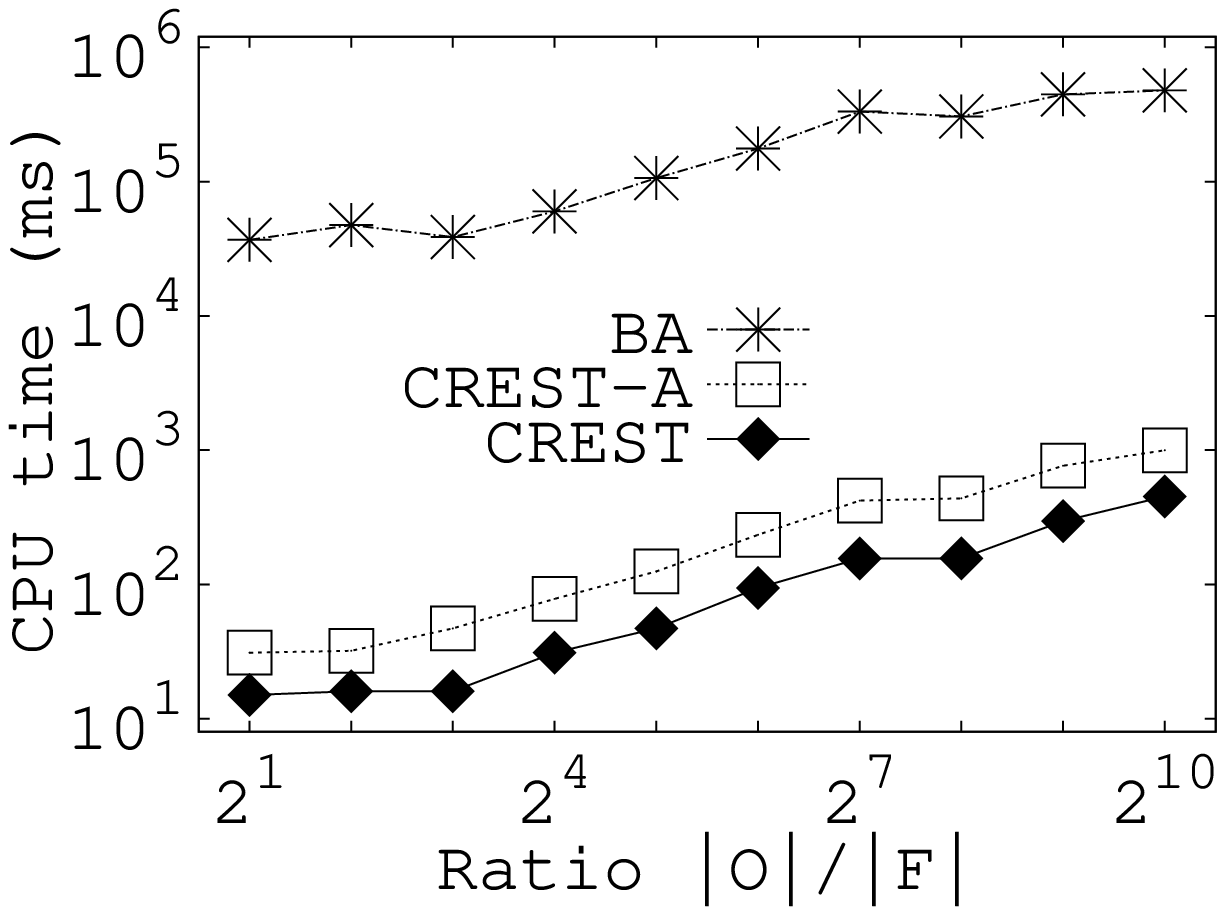, width=1.55in}
\label{fig:cfNY}
}
\subfigure[Uniform]{
\epsfig{file=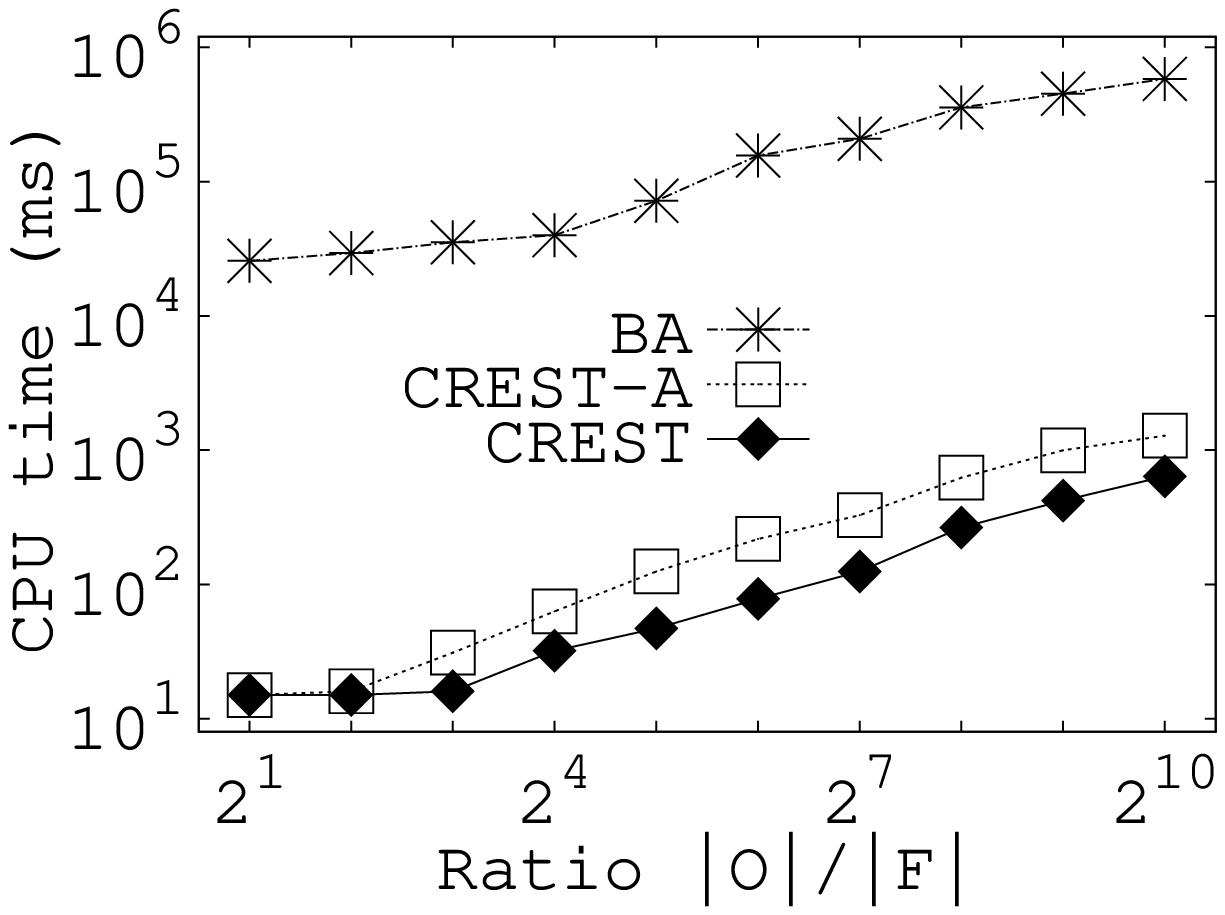, width=1.55in}
\label{fig:cfUF}
}
\subfigure[Zipfian]{
\epsfig{file=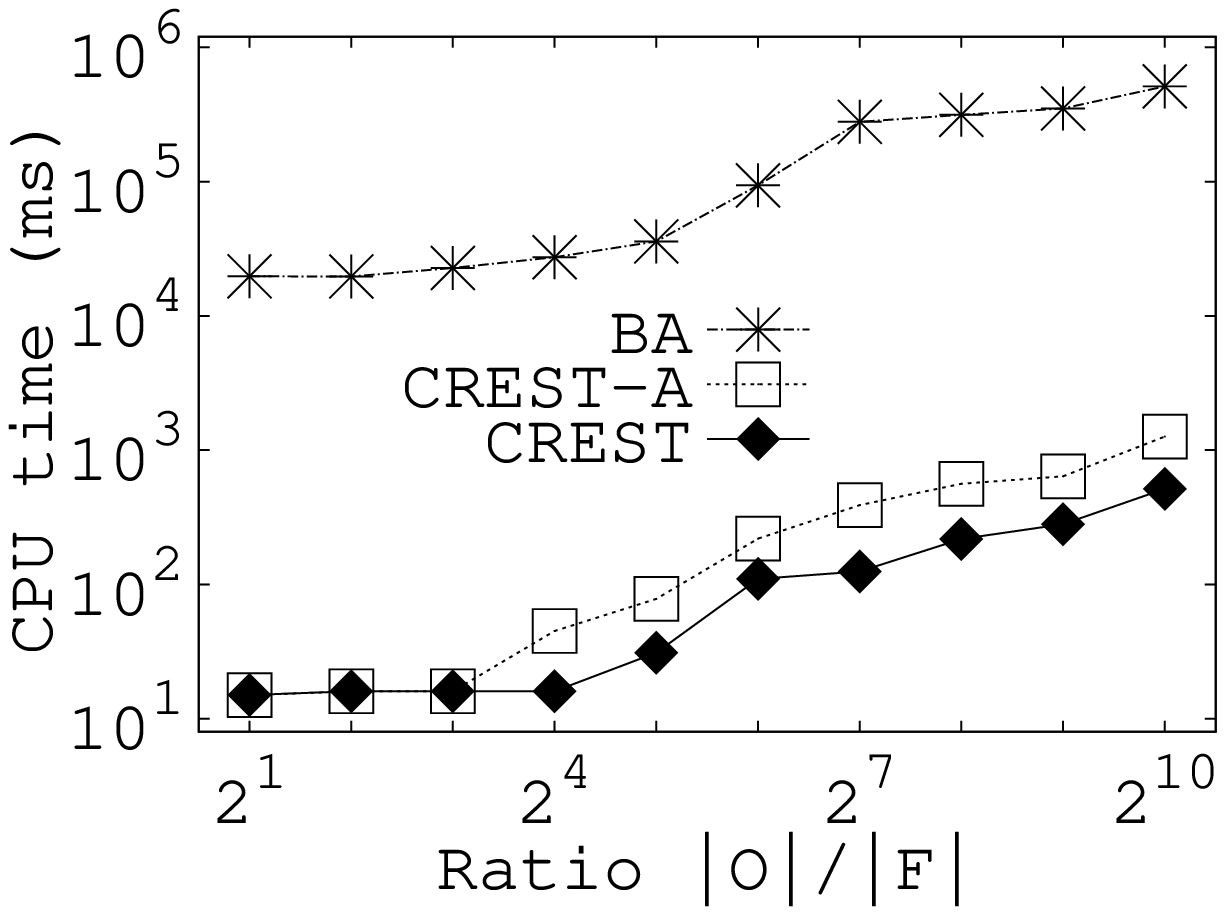, width=1.55in}
\label{fig:cfZF}
}
\shrink
\caption{Effect of $\frac{|\mathcal{O}|}{|\mathcal{F}|}$ with L{\small 1} diatance}\label{fig:cf}
\shrink 
\vspace{-5pt}
\end{figure}

\textbf{Effect of Data Set Size}.
We then fix $\frac{|\mathcal{O}|}{|\mathcal{F}|}$ at $2^7$ and vary the size of the client set $\mathcal{O}$ from $2^7$ to $2^{16}$. The results are plotted in Fig.~\ref{fig:n}. When the size of $\mathcal{O}$ is greater than $2^{13}$, the baseline runs for more than $24$ hours and
is early terminated, hence the results are not presented.
\begin{figure}[htb]
\centering
\subfigure[LA]{
\epsfig{file=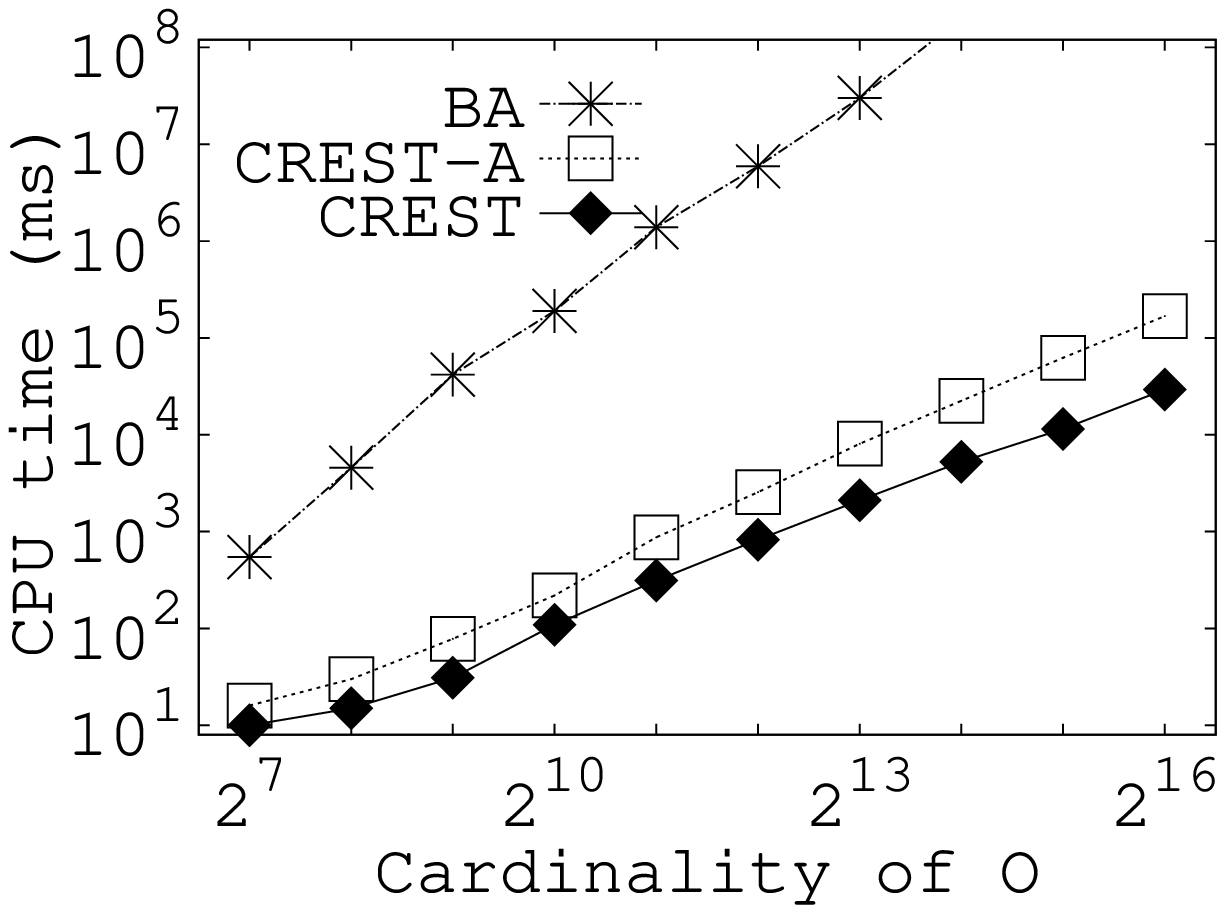, width=1.5in}
\label{fig:nLA}
}
\subfigure[NYC]{
\epsfig{file=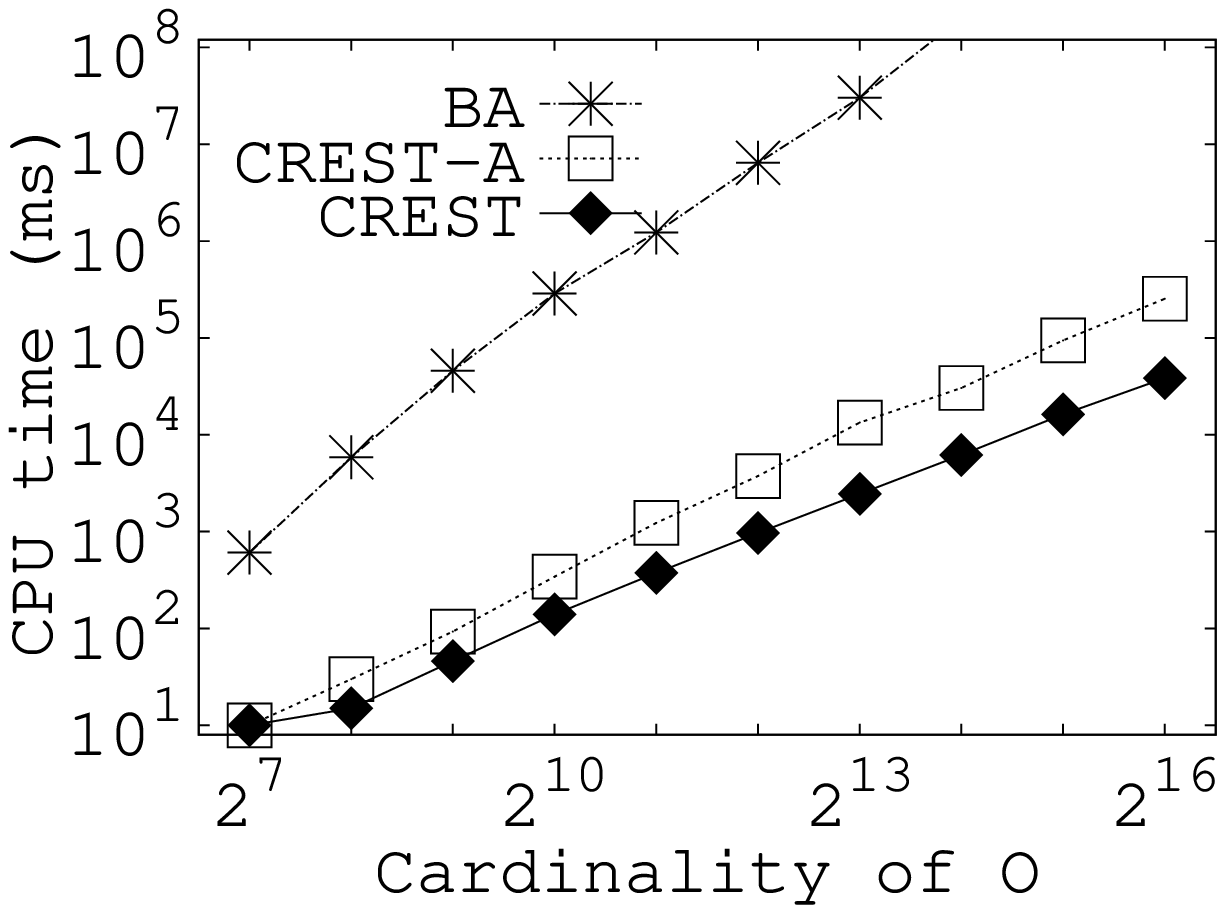, width=1.5in}
\label{fig:nNY}
}
\subfigure[Uniform]{
\epsfig{file=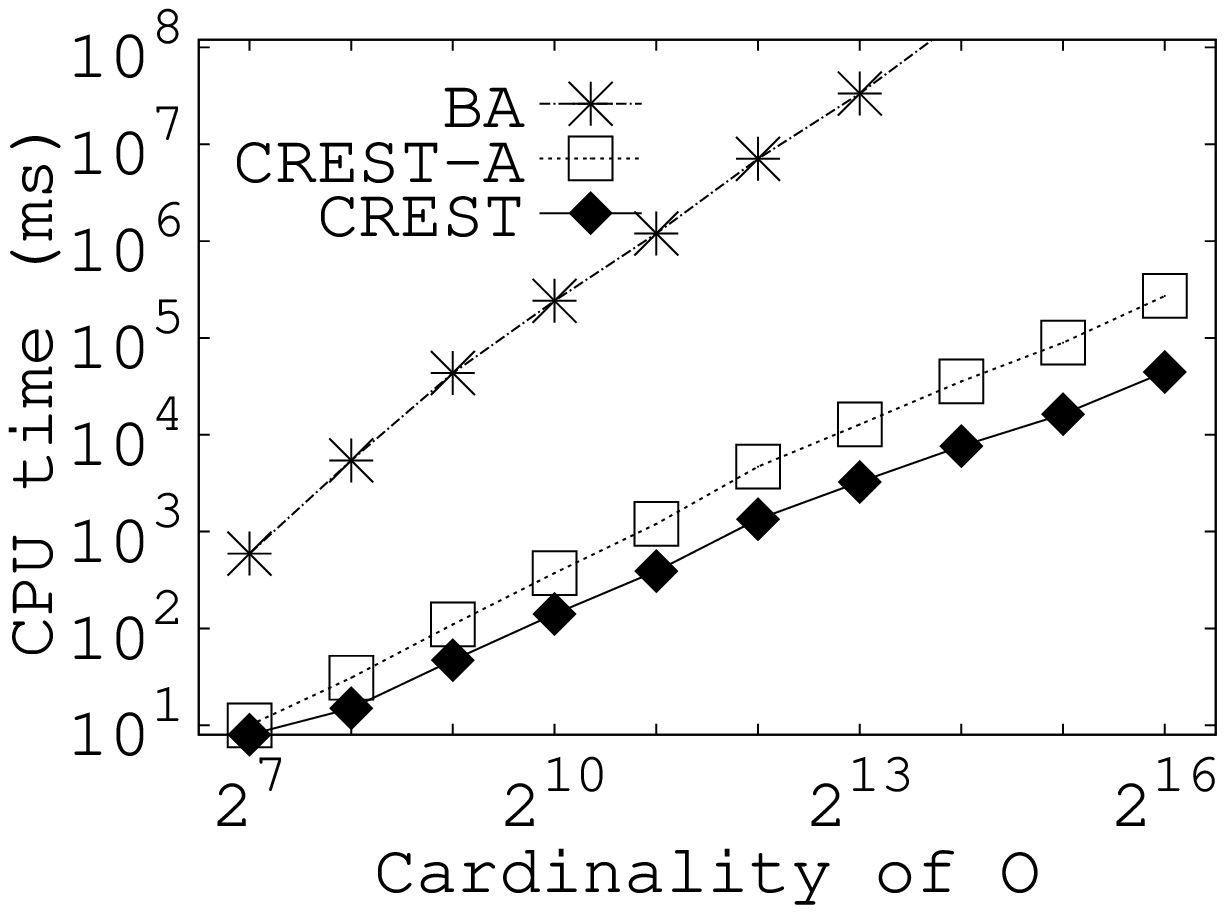, width=1.5in}
\label{fig:nUF}
}
\subfigure[Zipfian]{
\epsfig{file=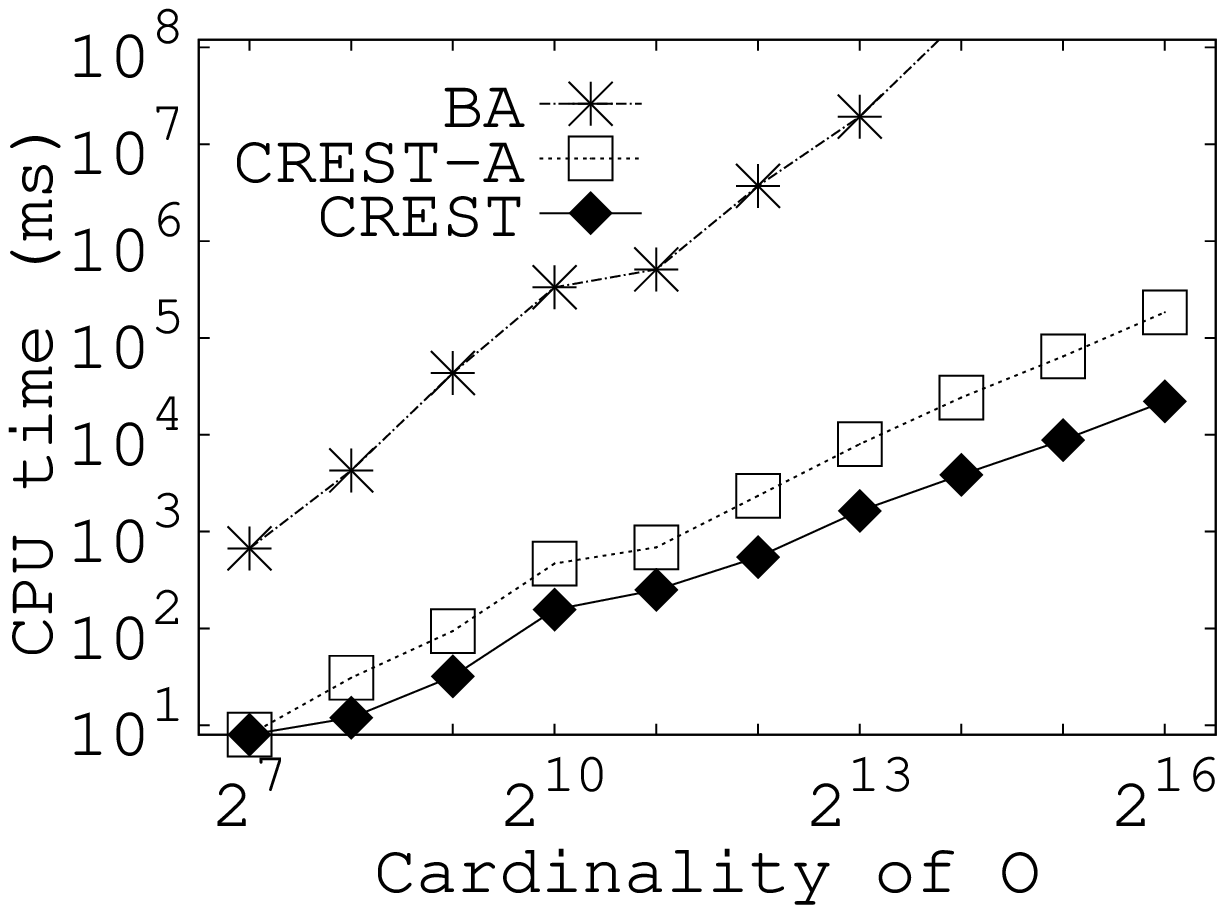, width=1.5in}
\label{fig:nZF}
}
\shrink
\caption{Effect of data set size with L{\small 1} diatance}\label{fig:n}
\shrink
\end{figure}

Again we can see that CREST outruns the baseline by at least three orders of magnitude and outruns CREST-A by up to an order of magnitude. The running time of the baseline increases much faster than that of the other two algorithms, which indicates the number of point enclosure queries computed in the baseline increases dramatically when $n = |\mathcal{O}|$ becomes larger.
The growth rate of CREST-A is also higher than that of CREST. This implies that the number of times of repeated labeling becomes larger with the increase of data size.
The lowest growth rate of CREST also demonstrates its scalability for processing much larger data sets.

%

\subsection{Performance of CREST with L{\small 2} Distance}
\vspace{-1pt} 
We repeat the above experiments with the $L_2$ distance metric, where the CREST algorithm for $L_2$ (\textbf{CREST-L2}) is compared with the pruning algorithm (\textbf{Pruning})
described in Section~\ref{subsec:L2}.
We compute the influence with the function in~\cite{Sun2012cikm} and use the two algorithms to find the regions with the \emph{maximum} influence, since in such settings the pruning algorithm performs the best.
This also shows the flexibility of CREST since the adaptation to various influence functions and supporting post-processing operations is very easy.

\begin{figure}[htb]
\centering
\subfigure[LA]{
\epsfig{file=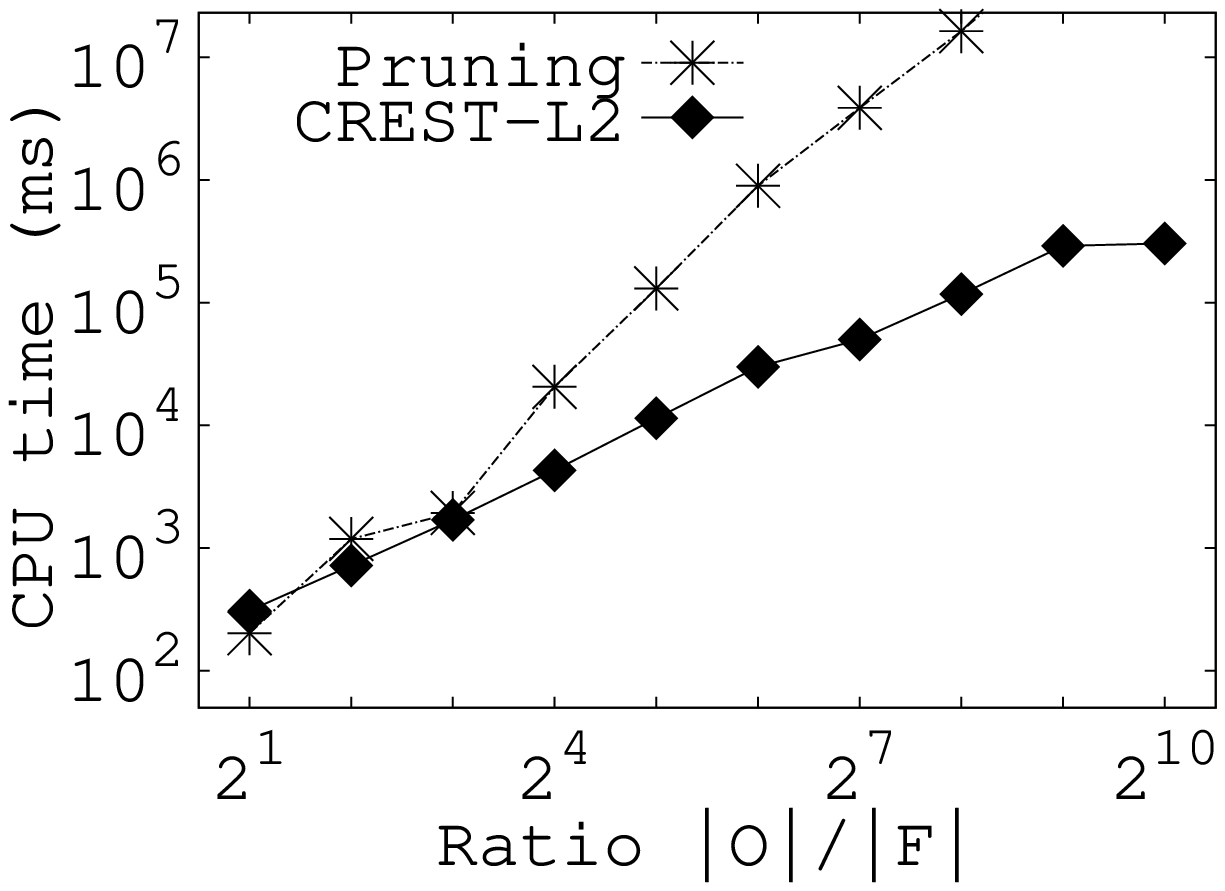, width=1.5in}
\label{fig:cfLA_L2}
}
\subfigure[NYC]{
\epsfig{file=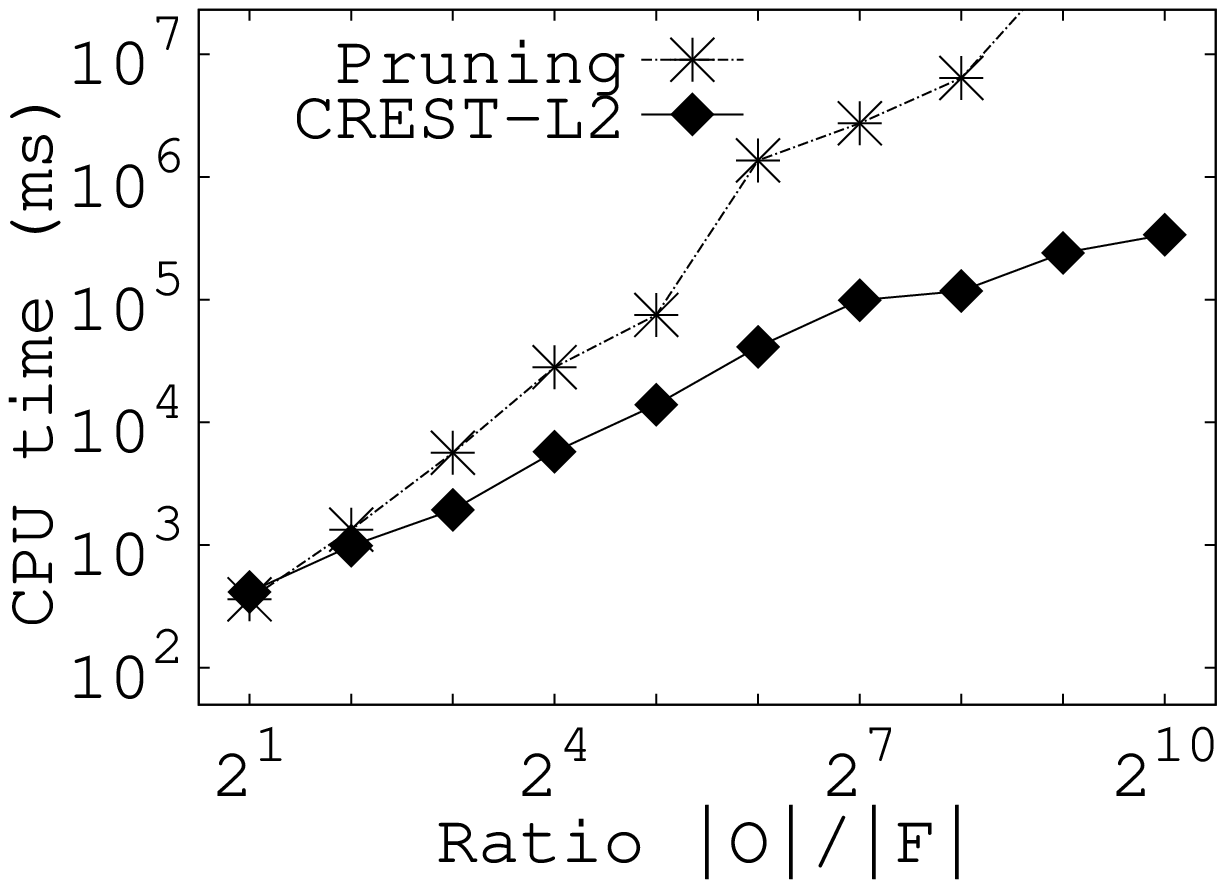, width=1.5in}
\label{fig:cfNY_L2}
}
\subfigure[Uniform]{
\epsfig{file=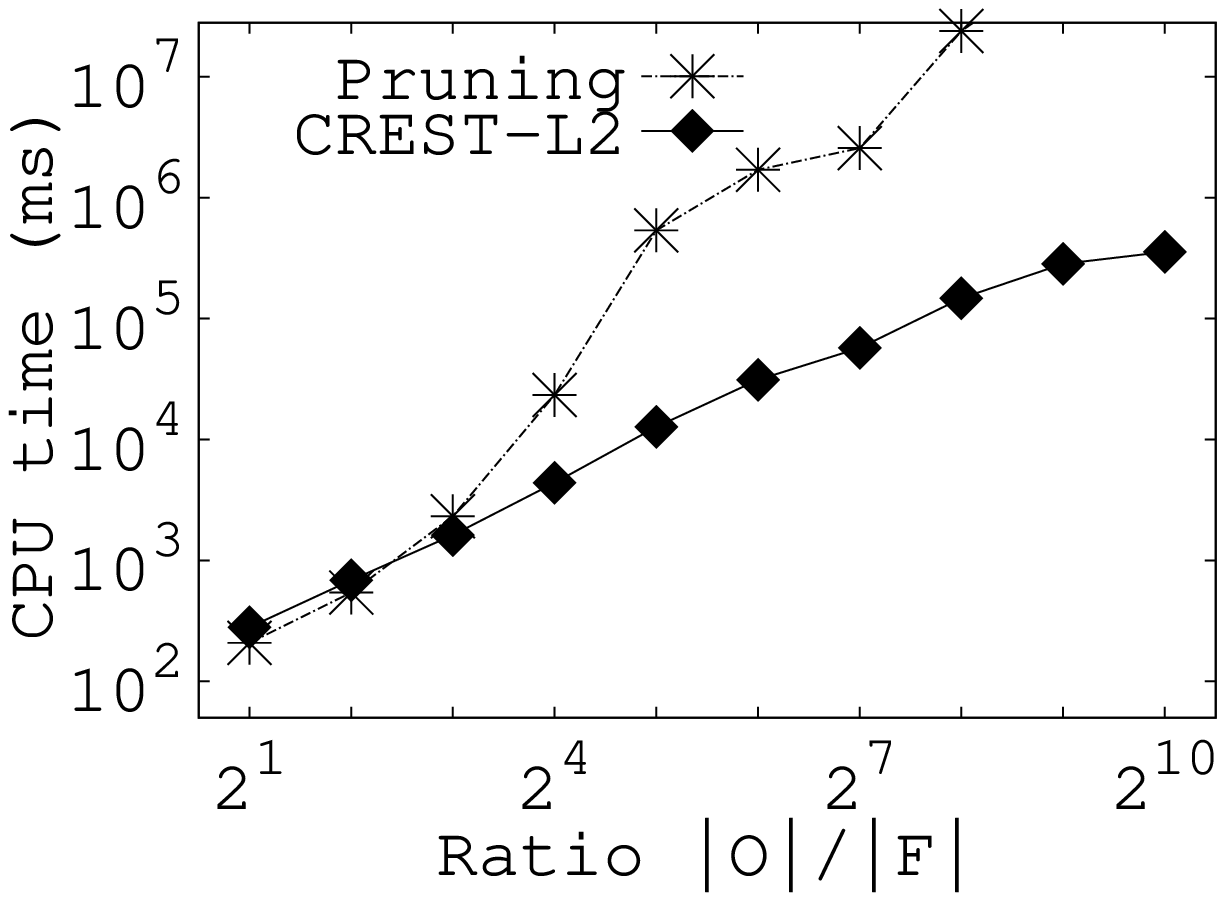, width=1.5in}
\label{fig:cfUF_L2}
}
\subfigure[Zipfian]{
\epsfig{file=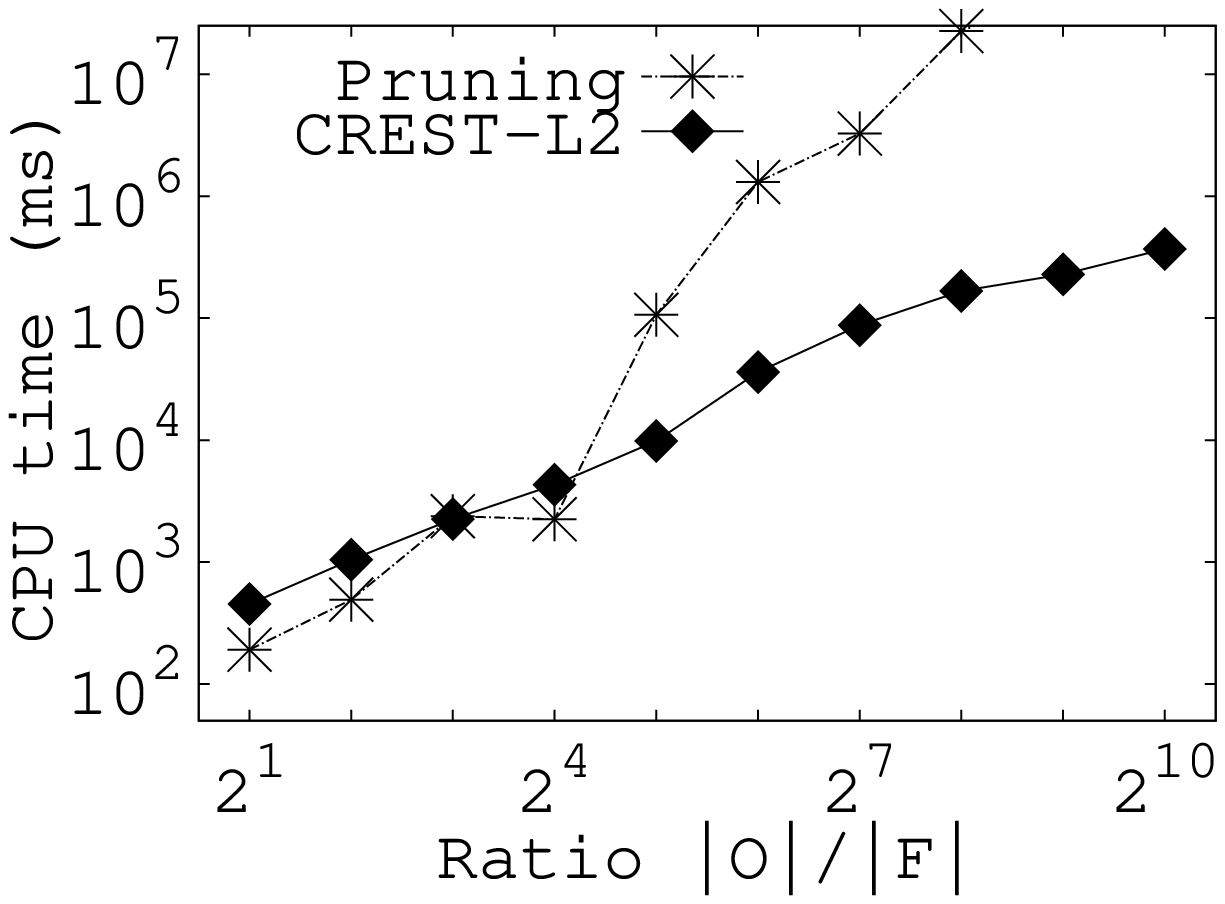, width=1.5in}
\label{fig:cfZF_L2}
}
\shrink
\caption{Effect of $\frac{|\mathcal{O}|}{|\mathcal{F}|}$ with L{\small 2} distance}\label{fig:L2_r}
\shrink
\end{figure}

\textbf{Effect of $\frac{|\mathcal{O}|}{|\mathcal{F}|}$}.
We first vary the ratio from $2^1$ to $2^{10}$ and plot the results in Fig.~\ref{fig:L2_r}. In all data sets, when the ratio is greater than $2^3$, we can see that CREST consistently outperforms the Pruning algorithm by several orders of magnitude. With the increase of $\frac{|\mathcal{O}|}{|\mathcal{F}|}$, the performance of the Pruning algorithm deteriorates rapidly, since the number of regions enumerated grows exponentially with the increase of $\frac{|\mathcal{O}|}{|\mathcal{F}|}$. Comparing with the Pruning algorithm, CREST has a much lower growth rate. When $\frac{|\mathcal{O}|}{|\mathcal{F}|}$ is less than or equal to $2^2$, in Fig.~\ref{fig:cfZF_L2} the Pruning algorithm runs slightly faster than CREST. This is because the number of regions enumerated in the Pruning algorithm is small, while the number of events in CREST is large when the data distribution is very skewed. Overall, CREST still outruns the Pruning algorithm by up to three orders of magnitude.

\textbf{Effect of Data Set Size}.
Next, we fix $\frac{|\mathcal{O}|}{|\mathcal{F}|}$ at $2^5$ and vary the size of $|\mathcal{O}|$ from $2^7$ to $2^{16}$. The results are presented in Fig.~\ref{fig:L2_n}. In all the data sets, again CREST consistently outperforms the Pruning algorithm. With the increase of $|\mathcal{O}|$, CREST and the Pruning algorithm have a similar growth rate. The running time of the Pruning algorithm gets closer to that of CREST when $|\mathcal{O}|$ is very large. This is because although the number of regions increases with the increase of $|\mathcal{O}|$, most of them are pruned without being searched in the Pruning algorithm, which does not happen in CREST. 
It is notable that even we solve the maximization problem with CREST which is quite general,
it still outruns the specialized Pruning algorithm designed for the problem, which demonstrates the efficiency of CREST.

\begin{figure}[htb]
\centering
\subfigure[LA]{
\epsfig{file=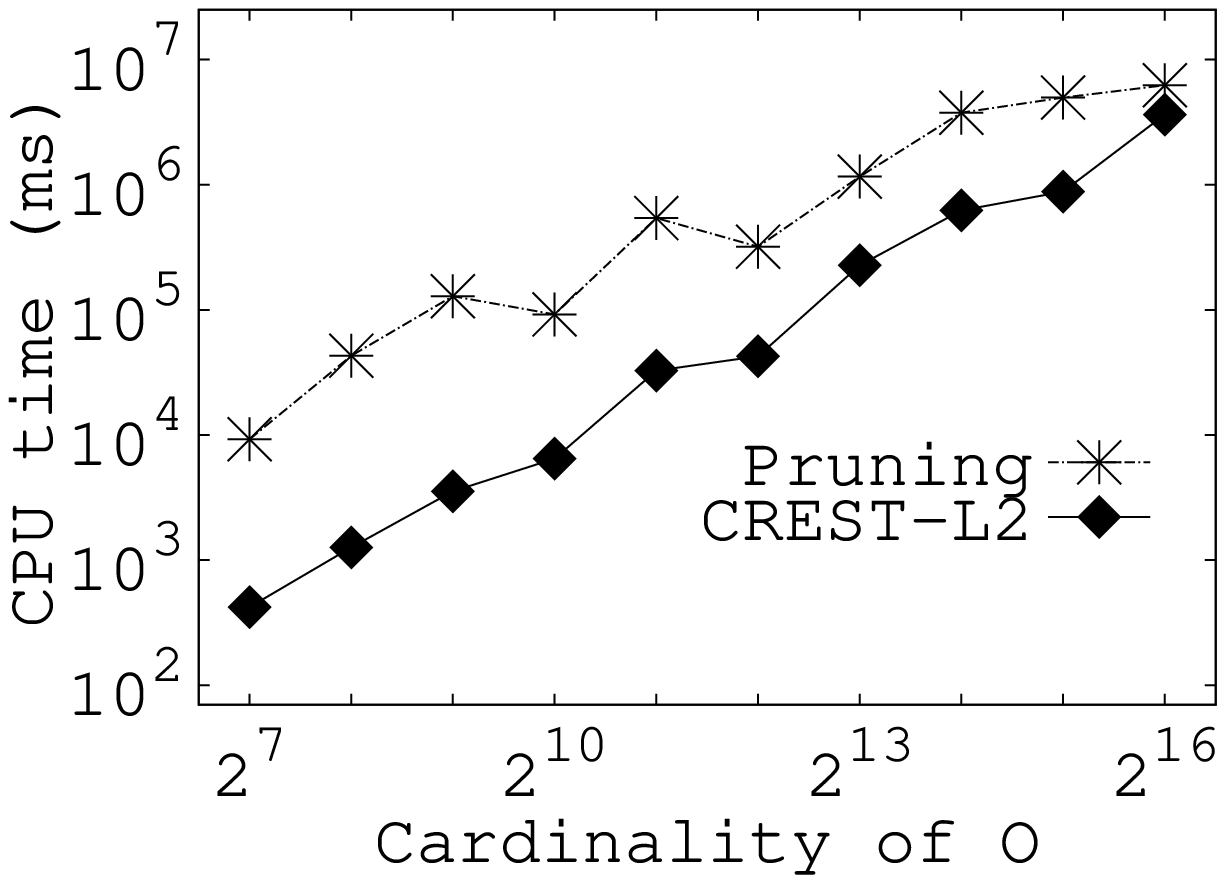, width=1.5in}
\label{fig:nLA_L2}
}
\subfigure[NYC]{
\epsfig{file=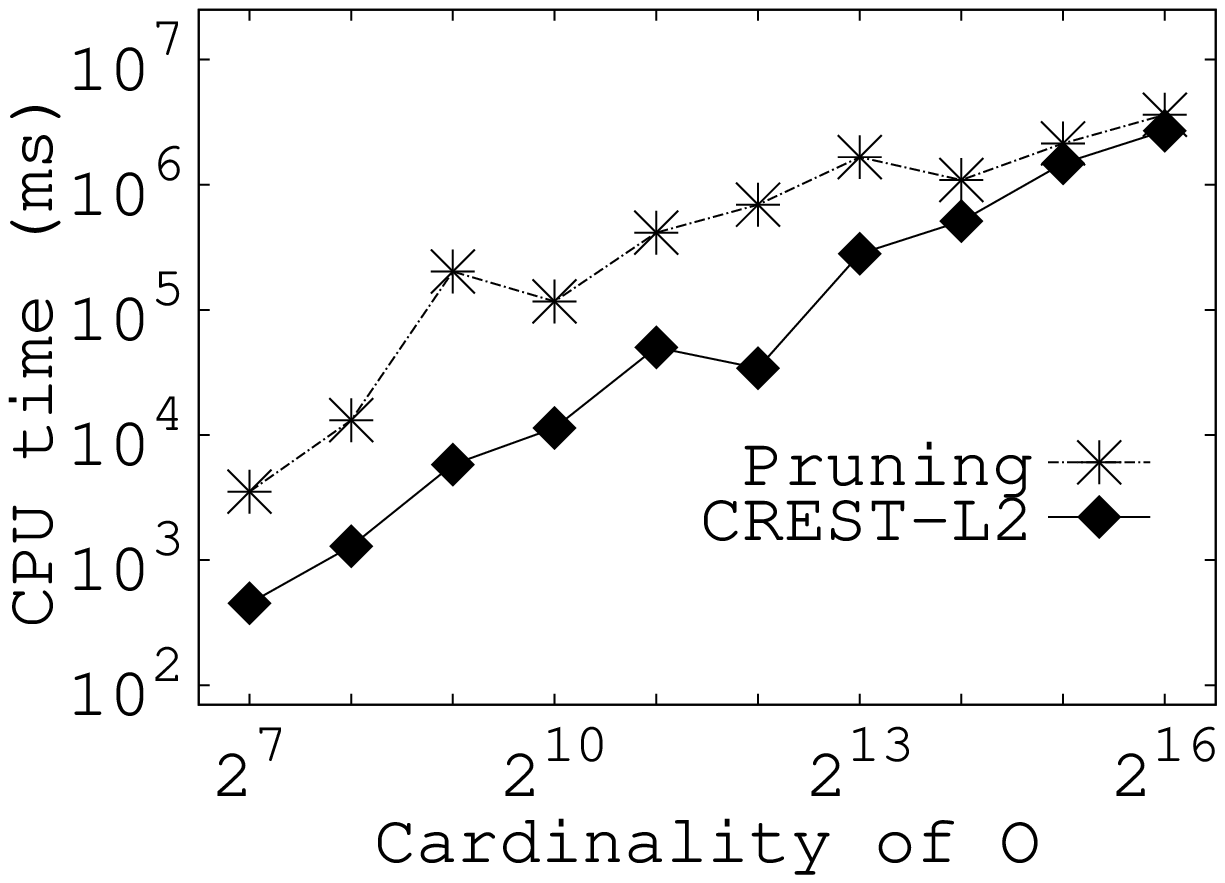, width=1.5in}
\label{fig:nNY_L2}
}
\subfigure[Uniform]{
\epsfig{file=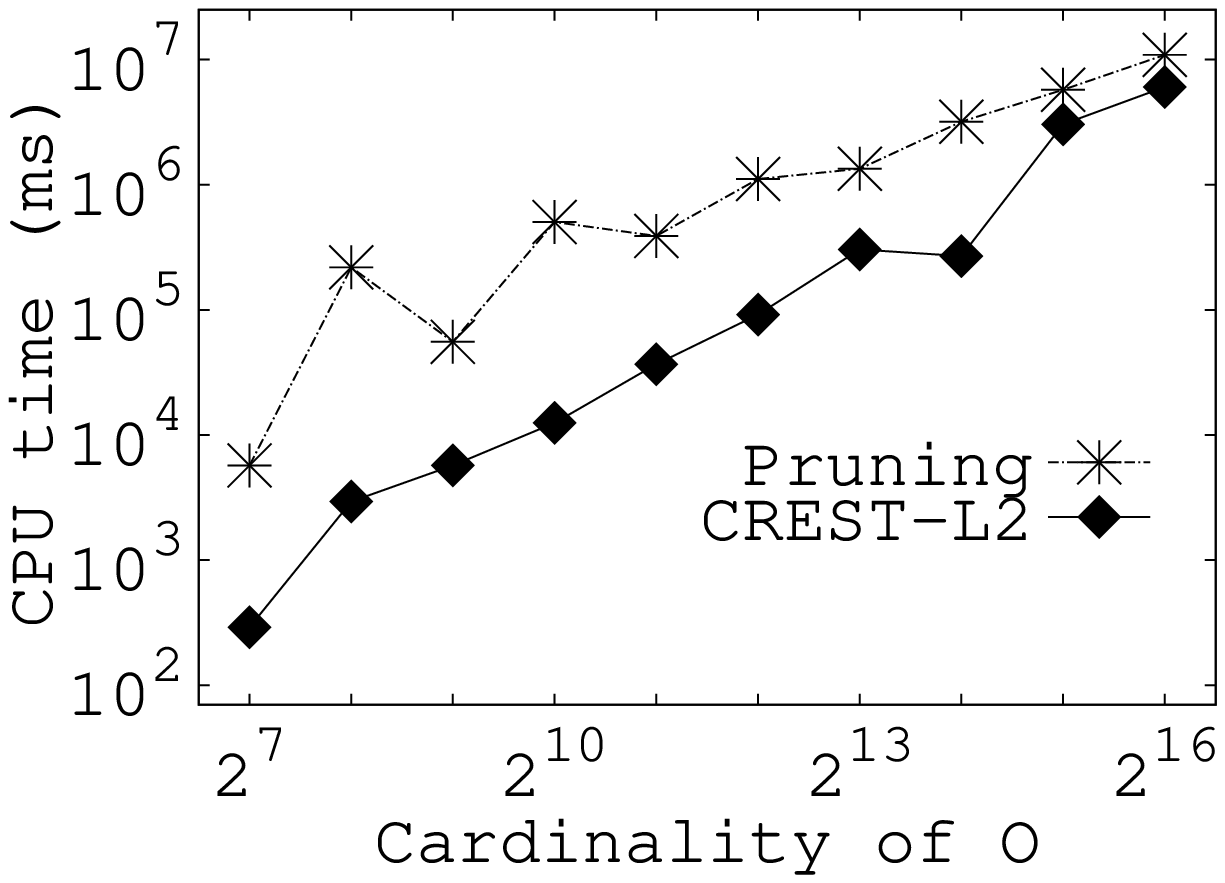, width=1.5in}
\label{fig:nUF_L2}
}
\subfigure[Zipfian]{
\epsfig{file=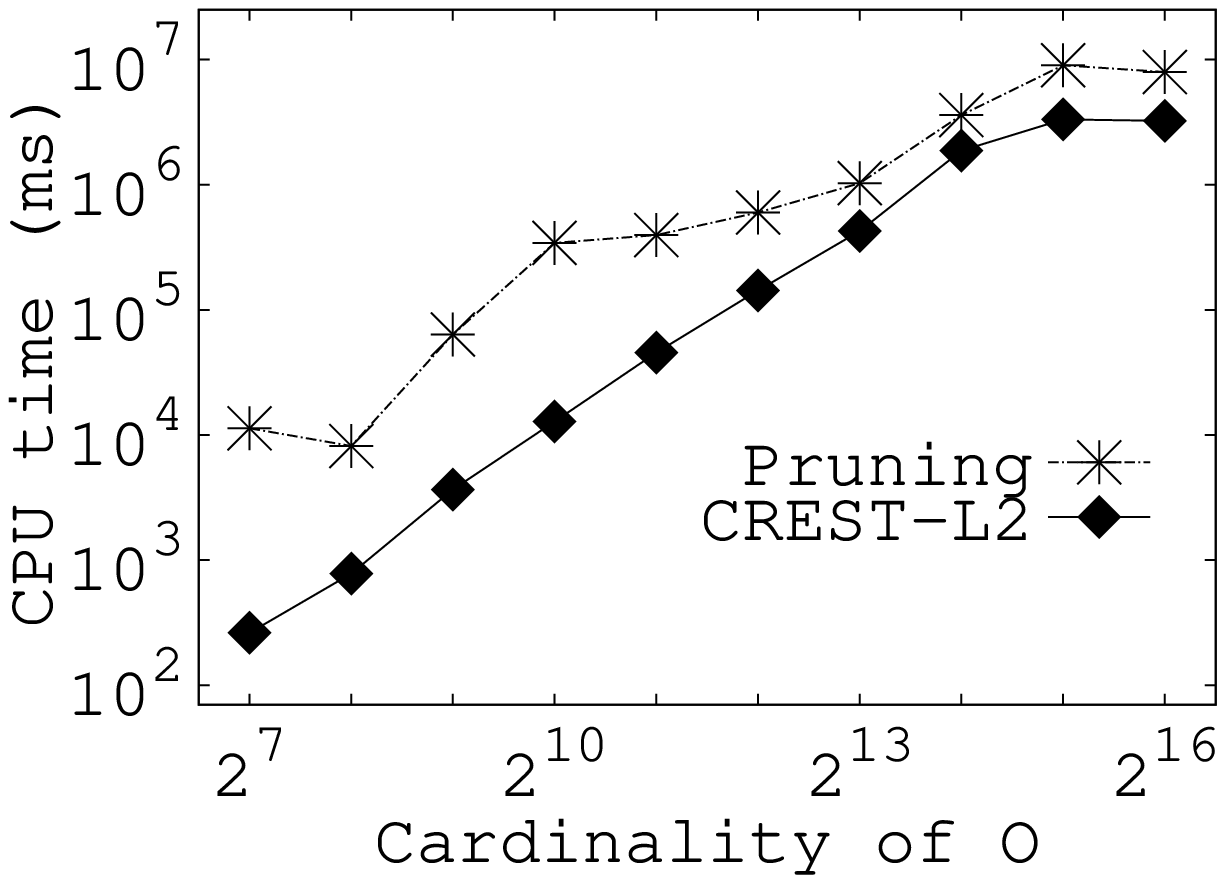, width=1.5in}
\label{fig:nZF_L2}
}
\shrink
\caption{Effect of data set size with L{\small 2} distance}\label{fig:L2_n}
\shrink
\end{figure}

\vspace{5pt}
\section{Conclusions} \label{sec:conclu}

In this paper, we proposed the RNN heat map problem, which computes the influence of every point in the space. 
Comparing to existing studies which give only the points or regions with the highest influence, the RNN heat map enables exploring the influence of the whole space while considering qualitative factors at any instant during the exploration.
We solved the problem by first reducing it to the Region Coloring (RC) problem, and then computing the influence on regions instead of points with a novel algorithm called CREST. We proposed two techniques in CREST, one to avoid point enclosure queries in the influence computation and the other to reduce the total number of times of the influence computation.
Through a detailed analysis, we showed that the number of influence computation in CREST is asymptotically optimal. We also showed that the worst-case time complexity of CREST is much lower than that of the baseline algorithm and in many cases meets the lower bound of RC. We conducted extensive experiments on both real and synthetic data sets. The results showed that CREST outperforms alternative algorithms by up to three orders of magnitude.

{\small
\textbf{Acknowledgments.}
This work is partially supported by the National Natural Science Foundation of China (Nos. 61402155 and 61432006). Rui Zhang is supported by ARC Future Fellow project FT120100832. Jianzhong Qi is supported by Melbourne School of Engineering Early Career Researcher Grant (No. 4180-E55) and University of Melbourne Early Career Researcher Grant (No. 603049).
\vspace{-10pt}
}

{
\bibliographystyle{abbrv}
{\small
\bibliography{ref}
}
}
\end{document}